\newtheorem{theorem}{Theorem}[section]
\newtheorem{observation}[theorem]{Observation}
\newtheorem{lemma}[theorem]{Lemma}
\newcommand{\prof}{\pi}
\newcommand{\prob}{p}
\newcommand{\capa}{q}
\newcommand{\loca}{\ell}
\newcommand{\job}{j}
\newcommand{\jobn}{k}
\newcommand{\coin}{c}
\newcommand{\agen}{i}
\newcommand{\Agen}{A}
\newcommand{\Grou}{S}
\newcommand{\GrouAlt}{T}
\newcommand{\grou}{x}
\newcommand{\groualt}{y}
\newcommand{\func}{f}
\newcommand{\ofunc}{o}
\newcommand{\tof}{\tilde{o}}
\newcommand{\tfunc}{\tilde{f}}
\newcommand{\prog}{p}
\newcommand{\sett}{s}
\newcommand{\corr}{C}
\newcommand{\step}{\mapsto}
\newcommand{\eval}{\Downarrow}
\newcommand{\argmin}{\mathrm{argmin}}
\newcommand{\smhat}[1]{\smash{\hat{#1}}}
\newcommand{\glink}[1]{{\small\url{#1}}}
\begin{document}
\title{Migration as Submodular Optimization}
\author{Paul G\"olz \and Ariel D. Procaccia\\
Computer Science Department\\
Carnegie Mellon University}
\maketitle
\begin{abstract}
Migration presents sweeping societal challenges that have recently attracted significant attention from the scientific community. One of the prominent approaches that have been suggested employs optimization and machine learning to match migrants to localities in a way that maximizes the expected number of migrants who find employment. However, it relies on a strong additivity assumption that, we argue, does not hold in practice, due to competition effects; we propose to enhance the data-driven approach by explicitly optimizing for these effects. Specifically, we cast our problem as the maximization of an approximately submodular function subject to matroid constraints, and prove that the worst-case guarantees given by the classic greedy algorithm extend to this setting. We then present three different models for competition effects, and show that they all give rise to submodular objectives. Finally, we demonstrate via simulations that our approach leads to significant gains across the board.  \end{abstract}

\section{Introduction}\label{sec:intro}
Migration is one of the greatest societal challenges facing humanity in the 21st Century.
In 2015 there were 244 million international migrants in the world, suggesting a larger increase in the rate of international migration than was previously anticipated~\citep{MR18}.
A key reason behind this increase is the widely understood fact that migration often has significant benefits to the migrants and their families, as well as the host country and the country of origin.
For example, migrants working in the United States earn wages that are higher by a median factor of 4.11 than those they would have earned in their home countries~\citep{CMP08}; and the money they send back to their countries of origin is a reliable source of foreign currency.
But the increase in the rate of migration is also driven by globalization, social disparities, and, in no small part, by a vast number of refugees --- including, most recently, millions who have fled war in Syria, persecution in Myanmar, and economic calamity in Venezuela.

These events have fueled a surge of attempts to put migration, and, especially, refugee resettlement, on scientific footing.
Alvin Roth, a Nobel laureate in economics, nicely summarizes the problem in a 2015 opinion piece~\citep{Roth15}:
\begin{quote}
``Refugee relocation is what economists call a matching problem, in the sense that different refugees will thrive differently in different countries.
Determining who should go where, and not just how many go to each country, should be a major goal of relocation policy.''
\end{quote}
This observation underlies work in market design, which draws on the extensive literature on matching problems such as school choice. It focuses on refugee resettlement mechanisms that elicit refugees' preferences over localities, and output a matching that is desirable with respect to these preferences~\citep{MR14,JT18,DKT16}.

By contrast, a recent paper by \citet{BFHD+18}, published in \emph{Science}, takes a markedly different, data-driven approach to refugee resettlement, which employs machine learning and optimization.
Their goal is to maximize the expected number of refugees who find employment.\footnote{To be precise, they consider families and maximize the expected number of families such that at least one member is employed, but this does not make a significant technical difference, so we focus on individuals for ease of exposition.}
Using historical data from the United States and Switzerland, they train predictors that estimate the probability $\prob_{\agen \loca}$ that a given refugee $\agen$ would find employment in a given locality $\loca$.
The optimal solution is then an assignment of refugees to localities that maximizes the sum of probabilities, subject to capacity constraints.
Assuming their predictions of employment probabilities are accurate, \citeauthor{BFHD+18} demonstrate that their approach leads to a 40\%--70\% increase in the number of employed refugees, compared to the actual outcomes in the United States and Switzerland.

On a high level, we subscribe to the data-driven approach, and believe that the assumptions made by \citet{BFHD+18} are quite reasonable in the context of refugee resettlement --- most notably, the implicit assumption that the probability $\prob_{\agen \loca}$ can be estimated based only on information about the refugee $\agen$ and the locality $\loca$, and does not depend on where other refugees are assigned.
But as this approach gains traction, we envision it being deployed on a larger scale, and ultimately informing international migration policy more broadly, especially in the context of labor migration.

The key observation behind our work is that, at that scale, \emph{competition effects} would invalidate the foregoing assumption. Indeed, the larger the number of, say, engineers, who settle in a specific area, the less likely it is that any particular engineer would find a job. The immigration of approximately 331,000 Jews from the former Soviet Union to Israel in 1990 and 1991 serves as a case in point. A disproportionate number of these highly-educated immigrants were engineers and doctors, leading to a saturation of the local job markets: ``a state with an oversupply of doctors could not possibly
double the number of its doctors in several years''~\citep{Smoo08}. Our goal in this paper, therefore, is to enhance the data-driven approach to migration by explicitly modeling competition effects, and directly optimizing for them.

\subsection{Our Approach and Results}

On a technical level, our objective function $\func$ receives a set of migrant-locality pairs as input, and returns the predicted number of employed migrants under the corresponding assignment. Crucially, we assume that $\func$ is (monotone) \emph{submodular}: individual elements provide diminishing marginal returns.
Formally, if $\Grou$ and $\GrouAlt$ are two subsets of migrant-locality pairs such that $\Grou \subseteq \GrouAlt$, and $(\agen,\loca)$ is a migrant-locality pair such that $(\agen,\loca)\notin \GrouAlt$, then
$$\func(\Grou \cup \{(\agen,\loca)\}) - \func(\Grou)\geq \func(\GrouAlt \cup \{(\agen,\loca)\}) - \func(\GrouAlt).$$
This captures the idea that the larger the number of migrants that compete with $\agen$ for a job at locality $\loca$, the less likely it is that $\agen$ herself would find employment --- especially if it is a skilled job --- and the smaller the marginal contribution of $(\agen,\loca)$ to the objective function (overall number of migrants who find employment).

We can therefore cast our optimization problem as the maximization of a monotone submodular function subject to \emph{matching constraints}, which represent caps on the number of migrants each locality can absorb.\footnote{Capacity constraints can be transformed into matching constraints on the complete bipartite graph with migrants on one side, and localities on the other, where the number of copies of each locality is equal to its capacity.}
In fact, we allow the objective function to be \emph{approximately} submodular since most ways of estimating employment for a matching will introduce noise that invalidates submodularity.

The matching constraints in question can be naturally described as the intersection of two \emph{matroids}.
In \cref{sec:methods}, we show that a simple greedy algorithm --- which is known to perform well when given access to an exactly submodular function --- gives an approximation guarantee of
$( P + 1 + \frac{4 \, \epsilon}{1 - \epsilon} \, k)^{-1}$
when maximizing a submodular function $\func$ subject to $P$ many matroids if we only have access to an $\epsilon$-approximately submodular function approximating $\func$ and if $k$ is the size of the largest set in the intersection of the matroids.
We expect this result to be of independent use.
In our setting, it gives us a guarantee --- for $P = 2$ --- on the performance of our greedy matching with respect to the optimal matching.

The submodular objective function can potentially be learned, or optimized directly, from historical data~\citep{BH18,BRS16} without any further structural assumptions.
As an alternative, in \cref{sec:models}, we propose three models of how migrants find employment, all of which induce submodular objective functions.
The purpose of these models is twofold:
First, they represent different ways in which competition effects might arise, thereby helping to understand them.
Second, in practice they may prove to be more accurate as they are defined by relatively few parameters, and are easier to train.

In \cref{sec:simulations}, we compare the employment generated by the greedy heuristic on the submodular model to the baseline approach of assuming additivity.
We find that the benefits of accounting for submodularity almost always outweigh the loss associated with using an inexact optimization technique.
Across our three models and a variety of settings, the greedy approach frequently increases employment by 10\% and more, suggesting that substantial gains can be made in practice.

\subsection{Related Work}

Our work is most closely related to that of \citet{ADF17}.
Motivated by \emph{diversity} in problems such as movie recommendation and assignment of papers to reviewers, they formulate a weighted bipartite $b$-matching problem with a specific submodular objective, which is similar to our problem under a particular instantiation of the retroactive-correction model.
They show that this problem can be formulated as a quadratic program (which may be intractable at a large scale).
They also use the greedy algorithm, but only note in passing that it would give worst-case guarantees in a degenerate special case that essentially coincides with having no constraints.
By contrast, our worst-case guarantees of Section~\ref{sec:methods} hold for any \emph{approximately} submodular objective function \emph{under capacity constraints}, or even under an arbitrary number of matroid constraints.
Another key difference is that \citeauthor{ADF17} focus on one specific objective function, whereas our results of Section~\ref{sec:models} explore several different models, which are tailored to the migration domain. Perhaps the most striking difference is more fundamental: For \citeauthor{ADF17}, diversity is an objective that is orthogonal to (additive) efficiency. By contrast, we consider diversity as an inherent part of efficiency since matchings lacking in diversity will suffer from competition effects.

A bit further afield, there is a large body of work on submodular optimization, and its applications in AI. The papers of \cite{FNW78} and \cite{HS16} are especially relevant --- we discuss their results in Section~\ref{sec:methods}. 
Applications of submodular optimization include influence in social networks~\citep{KKT03}, sensor placement~\citep{KLCV+08}, and human computation~\citep{DH10}, just to name a few.
 
\section{Preliminaries}\label{sec:prelim}
Let $N$ be a set of agents or migrants, and $L$ be a set of localities.
Each locality $\loca$ has a capacity of $\capa_{\loca} \in \mathbb{N}$, limiting the number of migrants it is willing to accept.
A matching is a set $\Grou \subseteq N \times L$ of migrant-locality pairs, such that every agent is matched to at most one locality and every locality $\loca$ to at most $\capa_{\loca}$ migrants.
Our general aim is to find matchings that maximize a given function $\func : 2^{N \times L} \to \mathbb{R}_{\geq 0}$, which assigns a utility to every matching.

In this paper, we will assume that $\func$ is either submodular or approximately submodular.
To be \emph{submodular}, $\func$ must satisfy for all $\Grou \subseteq \GrouAlt \subseteq N \times L$ and $\grou \notin \GrouAlt$ that $\func(\Grou \cup \{\grou\}) - \func(\Grou) \geq \func(\GrouAlt \cup \{\grou\}) - \func(\GrouAlt)$.
To prove submodularity, it is sufficient to show the above for $\GrouAlt$ being of the shape $\Grou \cup \{\groualt\}$ with $\groualt \notin \Grou$.
A function $\func$ is \emph{supermodular} iff $(-\func)$ is submodular, i.e., iff $\func(\Grou \cup \{\grou\}) - \func(\Grou) \leq \func(\GrouAlt \cup \{\grou\}) - \func(\GrouAlt)$ for appropriate $\Grou$, $\GrouAlt$ and $\grou$.
We will assume all submodular functions to be monotone and normalized, i.e., $\func(\emptyset) = 0$.

A function $\func$ is \emph{$\epsilon$-approximately submodular} for some $\epsilon > 0$ if and only if there is a submodular function $\smhat{\func}$ such that $(1 - \epsilon) \, \smhat{\func}(\Grou) \leq \func(\Grou) \leq (1 + \epsilon) \, \smhat{\func}(\Grou)$.
Note that $\func$ itself need not be monotone.

A (finite) \emph{matroid} is a pair $(G, \mathcal{F})$ of a finite ground set $G$ and a set $\mathcal{F} \subseteq 2^G$ of \emph{independent sets} such that
(i)~$\emptyset \in \mathcal{F}$,
(ii)~if $\Grou \in \mathcal{F}$, then every subset $\GrouAlt$ of $\Grou$ is also in $\mathcal{F}$, and
(iii)~if $\Grou, \GrouAlt \in \mathcal{F}$ such that $|\Grou| = |\GrouAlt| + 1$, there is some $\grou \in \Grou \setminus \GrouAlt$ such that $\GrouAlt \cup \{\grou\} \in \mathcal{F}$.
For any set $\Grou \subseteq G$, let its \emph{rank} $r(\Grou)$ denote the size of the largest independent subset of $\Grou$, and let its \emph{span} $\mathit{sp}(\Grou)$ be the set of all elements $\grou \in G$ such that $r(\Grou) = r(\Grou \cup \{\grou\})$.
An important class of matroids are the \emph{partition matroids} that partition the ground set $G$ into disjoint subsets $G_1, \dots, G_k$ and designate $\Grou \subseteq G$ as independent iff $|\Grou \cap G_i|$ is at most some cap $g_i$ for all $i$.
 
\section{Algorithmic Results}\label{sec:methods}
Both the constraint of matching each agent to at most one locality and the constraint induced by locality quotas easily translate into partition matroids over $N \times L$.\footnote{We can even deal with migrant-locality incompatibilities, by removing incompatible pairs from the ground set.}
In the first matroid, we restrict a matching to select at most one edge out of the set of edges incident to each agent $\agen$; in the second matroid, to at most $q_{\ell}$ edges out of the set of edges incident to each locality $\ell$.
Then, the matchings are exactly the sets that are independent in both matroids.
Therefore, maximizing employment among matchings is an instance of maximizing a submodular function subject to multiple matroid
 constraints.

To optimize for employment, we must choose a way of predicting it under a given matching.
For instance, we might use techniques from machine learning to generalize past observations about employment success.
Alternatively, we might adopt a hand-crafted model --- such as the ones presented in \cref{sec:models} --- and set only its parameters based on data.
The latter approach ensures that our predictor behaves reasonably on all inputs and requires less data for fitting.

However we choose to predict employment, the corresponding employment function will have a fairly complicated shape.
As a result, we face the obstacle that optimizing the function exactly would be impractical from a computational viewpoint.
Fortunately, \citet{FNW78} found that a simple greedy algorithm gives an approximation guarantee of $\smash{\frac{1}{P+1}}$ when optimizing a monotone, submodular function $z$ subject to $P$ matroid constraints.\footnote{\citet{LMNS10} report a polynomial-time algorithm that achieves an approximation ratio of $\frac{1}{P+\epsilon}$ for $P \geq 2$ partition matroids, where $\epsilon$ can be made arbitrarily small. However, the degree of the polynomial grows very fast as $\epsilon$ becomes smaller, making their algorithm impractical for our purposes.}
Furthermore, in practice, the same greedy algorithm has been found to perform much better than this theoretical guarantee, often giving results that are close to optimal~\citep{KKT03}.

Call the matroids $\mathcal{M}_p = (N, \mathcal{F}_p)$ for $1 \leq p \leq P$, let $\mathit{sp}^p$ denote the span with respect to $\mathcal{M}_p$, and let the intersection of the matroids be $\mathcal{F} \coloneqq \smash{\bigcap_{p=1}^{P}} \mathcal{F}_p$.
To refer to the marginal contribution of an element $j$ with respect to a set $S$, we set $\rho_j(S) \coloneqq z(S \cup \{j\}) - z(S)$.

The greedy algorithm initializes $S^0 \leftarrow \emptyset$, $N^0 \leftarrow N$ and $t \leftarrow 1$.
Then, it proceeds through the following steps, wherein $t$ denotes the number of the current iteration:
\begin{description}
\item{\textbf{Step 0.}} If $N^{t-1} = \emptyset$, stop with $S^{t-1}$ as the greedy solution.
\item{\textbf{Step 1.}} Select $i(t) \in N^{t-1}$ for which $z(S^{t-1} \cup \{i(t)\})$ is maximal, with ties settled arbitrarily.
\item{\textbf{Step 2a.}} If $S^{t-1} \cup \{i(t)\} \notin \mathcal{F}$, set $N^{t-1} \leftarrow N^{t-1} \setminus \{i(t)\}$ and return to step 0.
\item{\textbf{Step 2b.}} If $S^{t-1} \cup \{i(t)\} \in \mathcal{F}$, set $\rho_{t-1} \leftarrow \rho_{i(t)}(S^{t-1})$, $S^t \leftarrow S^{t-1} \cup \{i(t)\}$, and $N^t \leftarrow N^{t-1} \setminus \{i(t)\}$.
\item{\textbf{Step 3.}} Set $t \leftarrow t + 1$ and continue from step~0.
\end{description}

Although the greedy algorithm is an effective way of maximizing submodular functions, we are unlikely to have direct access to the submodular function we are interested in, even for the value queries required for the greedy algorithm.
If we adopt a machine-learning approach, our predictor might get reasonably close to the real employment function but is unlikely to be exactly monotone and submodular.
The same problem presents itself even for the hand-crafted models:
While we prove that each model induces a perfectly submodular function, this function is defined as the expectation over a random process.
For scenarios of nontrivial size, we can only estimate this function through repeated sampling, and the introduced noise may invalidate the monotonicity and submodularity of the function.
In both cases, we can only expect our functions to be approximately submodular.

Fortunately, the greedy algorithm still performs well if $z$ is only approximately submodular.
We adapt the classic result by \citet{FNW78} to show the following approximation guarantee.
\begin{theorem}
\label{thm:approx}
    Let $z : 2^N \to \mathbb{R}$ be $\epsilon$-approximately submodular and let $\smhat{z}$ be the underlying (monotone, normalized) submodular function, i.e., let
    \[ (1 - \epsilon) \, \smhat{z}(S) \leq z(S) \leq (1 + \epsilon) \, \smhat{z}(S) \]
    for all $S \subseteq N$. 
	Let $\mathcal{F}$ be the intersection of $P$ matroids, and let $k$ denote the size of the largest $S \in \mathcal{F}$.
	Then the greedy algorithm selects a set $S \in \mathcal{F}$ such that
    \[ \left(P + 1 + \frac{4 \, \epsilon}{1 - \epsilon} \, k\right) \, \smhat{z}(S) \geq \max_{S' \in \mathcal{F}} \smhat{z}(S'). \]
\end{theorem}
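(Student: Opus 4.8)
The plan is to adapt the argument of \citet{FNW78}: run the greedy algorithm on $z$, but carry out the \emph{entire} analysis in terms of the underlying submodular $\hat{z}$, and track the multiplicative $(1\pm\epsilon)$ slack carefully enough that it degrades the classic $\tfrac{1}{P+1}$ bound only additively. Assume without loss of generality that no element is a loop in any $\mathcal{M}_p$ (loops lie in no independent set and may be discarded). Relabel the elements the algorithm adds to its solution, in order of addition, as $i_1,\dots,i_m$, write $S_t \coloneqq \{i_1,\dots,i_t\}$ (so $S_0=\emptyset$ and $S_m=S$), and put $\hat\rho_t \coloneqq \hat{z}(S_t)-\hat{z}(S_{t-1})$, which is nonnegative by monotonicity of $\hat{z}$; since $\hat{z}$ is normalized, $\sum_{t=1}^{m}\hat\rho_t=\hat{z}(S)$. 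Let $O$ attain $\max_{S'\in\mathcal{F}}\hat{z}(S')$, so $O\in\mathcal{F}$ and $|O|\le k$.

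\emph{A covering inequality, then charging to greedy steps.} Because $\hat{z}$ is monotone and submodular, enumerating $O\setminus S$ and telescoping yields $\hat{z}(O)\le\hat{z}(O\cup S)\le\hat{z}(S)+\sum_{o\in O\setminus S}\bigl(\hat{z}(S\cup\{o\})-\hat{z}(S)\bigr)$, and by submodularity once more each summand is at most $\hat{z}(S_{t-1}\cup\{o\})-\hat{z}(S_{t-1})$ for any greedy prefix $S_{t-1}\subseteq S$. So it will suffice to assign each $o\in O\setminus S$ a step $t^*(o)$ at which this marginal is small. Since the algorithm removes an element only upon accepting it (into $S$) or rejecting it in Step 2a, every $o\in O\setminus S$ is rejected at some moment; fix a matroid index $p(o)$ with $S_{a(o)}\cup\{o\}\notin\mathcal{F}_{p(o)}$, where $S_{a(o)}$ is the current greedy solution then, so $o\in\mathit{sp}^{p(o)}(S_{a(o)})$. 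The rank inequality $|O\cap\mathit{sp}^{p}(S_t)|\le r_p(S_t)=t$ (an independent set inside a rank-$t$ set) shows that, sorting the elements assigned to a fixed $p$ by the first step at which they enter $\mathit{sp}^p$, the element in position $s$ enters at a step $\ge s$; let $t^*(o)$ be this position of $o$ under $p(o)$. Two facts follow: (a) $t^*(o)$ is at most the step at which $o$ enters $\mathit{sp}^{p(o)}$, which is at most $a(o)$, so $o$ is still available when $i_{t^*(o)}$ is selected, whence $z(S_{t^*(o)})\ge z(S_{t^*(o)-1}\cup\{o\})$ by the greedy rule; and (b) the map $o\mapsto t^*(o)$ is at most $P$-to-one, being injective on the elements assigned to each of the $P$ matroids.

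\emph{Converting to $\hat{z}$-gains and finishing.} Fix $o\in O\setminus S$ and put $t=t^*(o)$. Applying $(1-\epsilon)\hat{z}\le z\le(1+\epsilon)\hat{z}$ on both ends together with (a) gives $(1-\epsilon)\hat{z}(S_{t-1}\cup\{o\})\le z(S_{t-1}\cup\{o\})\le z(S_t)\le(1+\epsilon)\hat{z}(S_t)$, so $\hat{z}(S_{t-1}\cup\{o\})\le\tfrac{1+\epsilon}{1-\epsilon}\hat{z}(S_t)$, and hence
\[ \hat{z}(S_{t-1}\cup\{o\})-\hat{z}(S_{t-1}) \ \le\ \tfrac{1+\epsilon}{1-\epsilon}\,\hat{z}(S_t)-\hat{z}(S_{t-1}) \ =\ \hat\rho_t+\tfrac{2\epsilon}{1-\epsilon}\,\hat{z}(S_t) \ \le\ \hat\rho_t+\tfrac{2\epsilon}{1-\epsilon}\,\hat{z}(S). \]
Summing over $o\in O\setminus S$, using (b) to bound $\sum_{o}\hat\rho_{t^*(o)}\le P\sum_{t}\hat\rho_t=P\,\hat{z}(S)$ and $|O\setminus S|\le k$ for the second term, and substituting into the covering inequality, gives $\hat{z}(O)\le\bigl(P+1+\tfrac{2\epsilon}{1-\epsilon}k\bigr)\hat{z}(S)$, which implies the claimed inequality.

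\emph{Where the difficulty sits.} The crux is the displayed line. The naive conversion $\hat{z}(S_{t-1}\cup\{o\})-\hat{z}(S_{t-1})\le\tfrac{1+\epsilon}{1-\epsilon}\hat\rho_t$ turns the clean factor $P$ into $\tfrac{(1+\epsilon)P}{1-\epsilon}$ and so degrades the guarantee \emph{multiplicatively}; the identity $\tfrac{1+\epsilon}{1-\epsilon}\hat{z}(S_t)-\hat{z}(S_{t-1})=\hat\rho_t+\tfrac{2\epsilon}{1-\epsilon}\hat{z}(S_t)$ is precisely what keeps the coefficient of the greedy gain at $1$ and localizes the loss into an additive term incurred at most $|O|\le k$ times. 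A second point to watch is that $z$ need not be monotone, which forces the whole accounting — the covering step and the comparison above — to go through $\hat{z}$ and along greedy prefixes rather than through $z$ directly. Finally, the tempting claim that $|O\cap(\mathit{sp}^p(S_t)\setminus\mathit{sp}^p(S_{t-1}))|\le 1$ at every step is \emph{false}, so fact (b) genuinely requires the rank inequality and the sorting argument rather than a one-step counting.
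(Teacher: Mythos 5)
Your proof is correct, and it takes a genuinely different route from the paper's --- one that in fact yields a sharper constant. The paper follows \citet{FNW78} closely: it partitions the optimal set by the iteration at which each element is \emph{removed}, derives only the cumulative bound $\sum_{i \leq t} r_{i-1} \leq Pt$ from \cref{lem:span}, and then must invoke the rearrangement inequality of \cref{lem:dual}, which requires a monotonically decreasing gain sequence. Because greedy is run on $z$ rather than $\smhat{z}$, the gains $\smhat{\rho}_{t-1}$ need not decrease, so the paper monotonizes them via $\bar{\rho}_{t-1} = \min_{t' \leq t} \smhat{\rho}_{t'-1}$ at the cost of a second additive term $\frac{2\epsilon}{1-\epsilon}\,\smhat{z}(S^t)$ per element of $\smhat{T}$ --- which is exactly why the theorem's constant is $\frac{4\epsilon}{1-\epsilon}k$ rather than $\frac{2\epsilon}{1-\epsilon}k$. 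You instead build an explicit at-most-$P$-to-one assignment $o \mapsto t^*(o)$ by sorting each $O_p$ by span-entry time and applying the rank inequality, which guarantees both that each greedy step is charged at most $P$ times and that $o$ is still available at step $t^*(o)$ (since $t^*(o) \leq a(o)$, and $o$ cannot have been rejected within iteration $t^*(o)$ without forcing $a(o) < t^*(o)$). This sidesteps \cref{lem:dual} and the monotonization entirely, so the $\epsilon$-loss is paid only once, giving $\bigl(P + 1 + \frac{2\epsilon}{1-\epsilon}k\bigr)\,\smhat{z}(S) \geq \smhat{z}(O)$, which implies the stated bound and, for $P=1$, recovers the \citet{HS16} constant that the paper describes as relying on single-matroid-specific properties. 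The steps you flag as delicate are handled correctly: the $(1\pm\epsilon)$ conversion is applied at the two ends of the greedy comparison rather than to the marginal itself, and the $P$-to-one property is derived from the rank inequality rather than a (false) one-element-per-step claim. The only things left implicit are standard matroid facts (monotonicity of the span and $r(\mathit{sp}(A)) = r(A)$), which is acceptable for a sketch.
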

To the best of our knowledge, we are the first to give approximation bounds for optimizing approximately submodular functions subject to multiple matroid constraints.
For a single matroid, \citet{HS16} give a slightly sharper bound of $(2 + \smash{\frac{2 \, \epsilon}{1 - \epsilon}} \, k) \, \smhat{z}(S) \geq \max_{S'} \smhat{z}(S')$, but their proof makes use of strong properties that only hold if $P=1$. 
By contrast, in order to capture our matching constraints we need to take the intersection of two matroids, that is, for our application $P=2$, and the corresponding worst-case approximation guarantee is roughly $1/3$ --- assuming $\epsilon$ is sufficiently small. 
This is a realistic assumption, especially in our hand-crafted models, as standard concentration inequalities imply that sampling leads to exponentially fast convergence to the true function value.

Turning to the proof, let $U^t$ be the set of elements considered in the first $t+1$ iterations except for the element added to $S$ in iteration $t+1$, i.e., $U^t \coloneqq N \setminus N^{t}$. We make use of the following known results. 

\begin{lemma}[\citealt{FNW78}]
    \label{lem:span}
For all $t$, $U^t \subseteq \bigcup_{p=1}^{P} \mathit{sp}^p(S^t)$.
\end{lemma}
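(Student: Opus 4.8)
The plan is to sort the elements of $U^t$ by the reason they left the candidate set, and to verify that in each case the element lies in $\mathit{sp}^p(S^t)$ for some $p$. I will lean on two standard facts about a matroid $\mathcal{M}_p$, writing $r^p$ for its rank function: (a)~$r^p$ is submodular, so the span is monotone under inclusion, i.e.\ $A \subseteq B$ implies $\mathit{sp}^p(A) \subseteq \mathit{sp}^p(B)$; and (b)~if $A \in \mathcal{F}_p$ but $A \cup \{j\} \notin \mathcal{F}_p$, then $r^p(A \cup \{j\}) = r^p(A)$, so $j \in \mathit{sp}^p(A)$ --- the independent set $A$ is already a maximum-size independent subset of the dependent set $A \cup \{j\}$.

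First I would fix $t$ and take any $j \in U^t = N \setminus N^t$. If $j \in S^t$, then since $S^t \in \mathcal{F} = \bigcap_p \mathcal{F}_p$ is independent in every matroid and $S^t \cup \{j\} = S^t$, we trivially get $j \in \mathit{sp}^p(S^t)$ for every $p$. Otherwise $j$ was selected in Step~1 at some iteration and then rejected in Step~2a; let $S^s$ be the partial solution at that moment. Since $j$ was removed from the candidate set no later than the construction of $N^t$, we have $s \le t$, and since the greedy algorithm only ever grows its partial solution, $S^s \subseteq S^t$. Rejection means $S^s \cup \{j\} \notin \mathcal{F}$, hence $S^s \cup \{j\} \notin \mathcal{F}_p$ for at least one $p$; as $S^s \in \mathcal{F}_p$, fact~(b) gives $j \in \mathit{sp}^p(S^s)$, and then fact~(a) with $S^s \subseteq S^t$ gives $j \in \mathit{sp}^p(S^t)$. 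Either way $j \in \bigcup_{p=1}^{P} \mathit{sp}^p(S^t)$, as claimed. (Equivalently, one can run an induction on $t$: the fresh elements $U^t \setminus U^{t-1}$ are $i(t)$ together with those rejected while the partial solution equals $S^{t-1}$, each handled as above, while $U^{t-1} \subseteq \bigcup_p \mathit{sp}^p(S^{t-1}) \subseteq \bigcup_p \mathit{sp}^p(S^t)$ by the inductive hypothesis and fact~(a).)

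I do not expect a genuine obstacle here: the argument is essentially the two matroid facts plus the bookkeeping observation that every rejected element was tested against some $S^s \subseteq S^t$ and violated one specific matroid $\mathcal{M}_p$. The only spot that warrants care is monotonicity of the span in fact~(a), which rests on submodularity of the matroid rank function; everything else is immediate from the definitions of rank, span, and the intersection $\mathcal{F}$.
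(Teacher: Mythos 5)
Your proof is correct. The paper does not prove this lemma at all --- it imports it directly from \citet{FNW78} --- and your argument is precisely the standard one from that source: every element of $U^t = N\setminus N^t$ was either added (hence trivially spanned) or rejected against some partial solution $S^s\subseteq S^t$ because it violated a specific $\mathcal{M}_p$, which places it in $\mathit{sp}^p(S^s)\subseteq\mathit{sp}^p(S^t)$; your two supporting facts (monotonicity of span via submodularity of the rank function, and $j\in\mathit{sp}^p(A)$ whenever $A\in\mathcal{F}_p$ but $A\cup\{j\}\notin\mathcal{F}_p$) are both valid.
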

\begin{lemma}[\citealt{FNW78}]
    \label{lem:dual}
    Let $\rho_i, \sigma_i \geq 0$ be given for $i = 0, \dots, K - 1$ such that $\sum_{i=0}^{t-1}\sigma_i \leq t$ for $t = 1, \dots, K$, and the sequence $(\rho_i)_i$ decreases monotonically.
    Then,
    \[ \sum_{i=0}^{K-1} \rho_i \, \sigma_i \leq \sum_{i=0}^{K-1} \rho_i .\]
\end{lemma}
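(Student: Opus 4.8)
The plan is to prove the inequality by summation by parts (Abel summation), the key being to reformulate the claim in terms of partial sums so that the monotonicity of $(\rho_i)_i$ can be brought to bear. First I would recast the target as a nonnegativity statement: subtracting the left-hand side from the right, it suffices to show $\sum_{i=0}^{K-1}\rho_i\,(1-\sigma_i) \geq 0$. Setting $d_i \coloneqq 1 - \sigma_i$ and $D_t \coloneqq \sum_{i=0}^{t-1} d_i$ (with $D_0 = 0$), the hypothesis $\sum_{i=0}^{t-1}\sigma_i \leq t$ is exactly the statement that every partial sum is nonnegative, i.e.\ $D_t = t - \sum_{i=0}^{t-1}\sigma_i \geq 0$ for $t = 1,\dots,K$. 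Thus the whole problem reduces to the following clean form: given $D_t \geq 0$ and a monotonically decreasing, nonnegative weight sequence $\rho_0 \geq \rho_1 \geq \dots \geq \rho_{K-1} \geq 0$, show that $\sum_{i=0}^{K-1} \rho_i\, d_i \geq 0$.

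Next I would apply Abel summation using $d_i = D_{i+1} - D_i$. Expanding $\sum_{i=0}^{K-1}\rho_i(D_{i+1}-D_i)$, reindexing the $\rho_i D_{i+1}$ term by $j = i+1$, and using $D_0 = 0$ to eliminate the lower boundary contribution $\rho_0 D_0$, I obtain
\[ \sum_{i=0}^{K-1}\rho_i\,d_i \;=\; \rho_{K-1}\,D_K \;+\; \sum_{j=1}^{K-1}(\rho_{j-1}-\rho_j)\,D_j. \]

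Finally, I would read off nonnegativity term by term. The boundary term $\rho_{K-1}\,D_K$ is a product of the two nonnegative quantities $\rho_{K-1}\geq 0$ and $D_K \geq 0$. In the remaining sum, each factor $\rho_{j-1}-\rho_j$ is nonnegative \emph{precisely because} $(\rho_i)_i$ decreases monotonically --- this is the only point at which that hypothesis is used --- while each $D_j \geq 0$ by the reformulated constraint. Hence the entire right-hand side is a sum of products of nonnegative factors, so $\sum_{i=0}^{K-1} \rho_i\, d_i \geq 0$, which is equivalent to the desired bound $\sum_{i=0}^{K-1} \rho_i\,\sigma_i \leq \sum_{i=0}^{K-1} \rho_i$.

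The computation is routine, so there is no deep obstacle; the only place demanding care is the bookkeeping of the summation-by-parts reindexing --- in particular tracking the two boundary terms and verifying that $D_0 = 0$ removes the unwanted $\rho_0 D_0$ contribution, leaving the telescoped form above. The conceptual crux, rather than a genuine difficulty, is the initial reformulation: recognizing that the cumulative constraint $\sum_{i<t}\sigma_i \leq t$ should be read as nonnegativity of the partial sums $D_t$, since that is exactly the shape that pairs with a monotone weight sequence to yield a termwise-nonnegative expression under Abel summation.
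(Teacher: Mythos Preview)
Your proof is correct. The paper does not actually supply its own proof of this lemma; it is quoted verbatim from \citet{FNW78} and used as a black box in the proof of Theorem~\ref{thm:approx}. Your Abel-summation argument is precisely the standard way to establish this inequality (and is essentially how it is handled in the original reference): rewrite the claim as nonnegativity of $\sum_i \rho_i(1-\sigma_i)$, pass to partial sums $D_t = t - \sum_{i<t}\sigma_i \geq 0$, and telescope so that only products of nonnegative factors remain. There is nothing to correct.
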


We are now ready for the theorem's proof. 

\begin{proof}[Proof of \cref{thm:approx}.]
    Let $\smhat{T}$ be a maximizing subset for $\smhat{z}$ and let $S$ be the set returned by the greedy mechanism when run on $z$.
    Let $K = |S|$ and define $R_{t-1} \coloneqq \smhat{T} \cap (U^t \setminus U^{t-1})$, $r_{t-1}=|R_{t-1}|$, for $t = 1, \dots, K$.
    Without loss of generality, all elements of the ground set can actually appear in a set $S \in \mathcal{F}$, thus $U^0 =  \emptyset$. 

    Let $\smhat{\rho}_j(S)$ be the marginal contribution of $j$ with respect to $S$ and the function $\smhat{z}$.
    If $\rho_{t-1}$ was set in the algorithm as $\rho_{i(t)}(S^{t-1})$, let $\smhat{\rho}_{t-1}$ denote $\smhat{\rho}_{i(t)}(S^{t-1})$.

    By submodularity of $\smhat{z}$, it holds that
    \begin{equation}
        \label{eq:that}
        \smhat{z}(\smhat{T}) \leq \smhat{z}(S) + \sum_{j \in \smhat{T}} \smhat{\rho}_j(S).
    \end{equation}
    We can bound the second term as
    \begin{align}
        &\sum_{j \in \smhat{T}} \smhat{\rho}_j(S) \nonumber \\
        ={}& \sum_{t=1}^K \sum_{j\in R_{t-1}} \smhat{\rho}_j(S) \nonumber \\
        \leq{}& \sum_{t=1}^{K} \sum_{j\in R_{t-1}} \smhat{\rho}_j(S^{t - 1}) \tag*{(submodularity, $S^{t-1} \subseteq S$)} \\
        \leq{}& \sum_{t=1}^{K} \sum_{j\in R_{t-1}} \frac{z(S^{t - 1} \cup \{j\})}{1 - \epsilon} - \sum_{t=1}^{K} \sum_{j\in R_{t-1}} \smhat{z}(S^{t - 1}). \nonumber \\
        \intertext{The greedy algorithm chose $i(t)$ as the element $j \in N^{t - 1} = N \setminus U^{t-1}$ with maximum $z(S^{t-1} \cup \{j\})$, thus}
        \leq{}& \cfrac{1}{1-\epsilon} \sum_{t=1\vphantom{R_{t-1}}}^{K} \sum_{j\in R_{t-1}} z(S^{t - 1} \cup \{i(t)\}) - \sum_{t=1\vphantom{R_{t-1}}}^{K} \sum_{j\in R_{t-1}} \smhat{z}(S^{t-1}) \nonumber \\
        \leq{}& \frac{1 + \epsilon}{1-\epsilon} \sum_{t=1}^{K} r_{t-1} \, \smhat{z}(S^{t}) - \sum_{t=1}^{K} r_{t-1} \, \smhat{z}(S^{t-1}) \nonumber \\
        ={}& \sum_{t=1}^K r_{t-1} \, \smhat{\rho}_{t-1} + \frac{2 \, \epsilon}{1 - \epsilon} \sum_{t=1}^{K} r_{t-1} \, \smhat{z}(S^t). \label{eq:b1c}
    \end{align}

	We claim that for all $t = 1, \dots, K$, $\smash{\sum_{i=1}^t} \frac{r_{i-1}}{P} \leq t$.
    Indeed, 
    \[ \sum_{i=1}^t r_{i-1} = |\smhat{T} \cap U^t|.\]
    Since, by \cref{lem:span}, $U^t \subseteq \smash{\bigcup_{p=1}^P} \mathit{sp}^p(S^t)$, it follows that $$\sum_{i=1}^t r_{i-1} \leq \smash{\sum_{p=1}^P} |\smhat{T} \cap \mathit{sp}^p(S^t)|.$$
    Because $\smhat{T}$ is independent in the matroid $(N, \mathcal{F}_p)$, and because the rank of $\mathit{sp}^p(S^t)$ is $t$, $|\smhat{T} \cap \mathit{sp}^p(S^t)| \leq t$.
    Thus, $\sum_{i=1}^t r_{i-1} \leq P \, t$, and the claim follows.

    From the sequence $(\smhat{\rho}_{t-1})_t$, form a new sequence $(\bar{\rho}_{t-1})_t$, where 
    \[\bar{\rho}_{t-1} \coloneqq \min_{1 \leq t' \leq t} \smhat{\rho}_{t' - 1}.\]
    Clearly, this sequence decreases monotonically and is nonnegative.
    Fix some $t$ and let $t' = \argmin_{1 \leq t' \leq t} \, \smhat{\rho}_{t' - 1}$.
    Then, we can relate $\smhat{\rho}_{t-1}$ and $\bar{\rho}_{t-1}$ as follows:
    \begin{align*}
        \smhat{\rho}_{t - 1} &= \smhat{\rho}_{i(t)}(S^{t-1}) \\
        &\leq \smhat{\rho}_{i(t)}(S^{t' - 1}) \tag*{(submodularity of $\smhat{z}$, $t' \leq t$)} \\
        &= \smhat{z}(S^{t'-1} \cup \{i(t)\}) - \smhat{z}(S^{t'-1}) \\
        &\leq \frac{z(S^{t'-1} \cup \{i(t)\})}{1-\epsilon} - \smhat{z}(S^{t'-1}) \\
        &\leq \frac{z(S^{t'-1} \cup \{i(t')\})}{1-\epsilon} - \smhat{z}(S^{t'-1}) \tag*{(greedy chose $i(t')$ over $i(t)$ in iteration $t'$)} \\
        &\leq \frac{1+\epsilon}{1-\epsilon}\,\smhat{z}(S^{t'-1} \cup \{i(t')\}) - \smhat{z}(S^{t'-1}) \\
        &= \smhat{\rho}_{t'-1} + \frac{2 \, \epsilon}{1 - \epsilon} \, \smhat{z}(S^{t'}) \\
        &\leq \smhat{\rho}_{t'-1} + \frac{2 \, \epsilon}{1 - \epsilon} \, \smhat{z}(S^{t}) \tag*{(monotonicity)} \\
        &= \bar{\rho}_{t-1} + \frac{2 \, \epsilon}{1 - \epsilon} \, \smhat{z}(S^{t}) \tag*{(def.\ of $\bar{\rho}_{t-1}$, choice of $t'$).}
    \end{align*}
    By \cref{lem:dual}, we know that
    \begin{equation}
        \label{eq:srhop}
        \sum_{t=1}^K r_{t-1} \, \bar{\rho}_{t-1} = P \, \sum_{t=1}^K \frac{r_{t-1}}{P} \, \bar{\rho}_{t-1} \leq P \, \sum_{t=1}^K \bar{\rho}_{t-1}.
    \end{equation}
    This allows us to continue \cref{eq:b1c}:
    \begin{align*}
        &\sum_{t=1}^K r_{t-1} \, \smhat{\rho}_{t-1} + \frac{2 \, \epsilon}{1 - \epsilon} \sum_{t=1}^{K} r_{t-1} \, \smhat{z}(S^t) \\
        \leq{}& \sum_{t=1}^K r_{t-1} \left(\bar{\rho}_{t-1} + \frac{2 \, \epsilon}{1 - \epsilon} \, \smhat{z}(S^t)\right) + \frac{2 \, \epsilon}{1 - \epsilon} \sum_{t=1}^{K} r_{t-1} \, \smhat{z}(S^t) \\
        \leq{}& P \, \sum_{t=1}^K \bar{\rho}_{t-1} + \frac{4 \, \epsilon}{1 - \epsilon} \sum_{t=1}^{K} r_{t-1} \, \smhat{z}(S^t) \tag*{(Eq.~\ref{eq:srhop})} \\
        \leq{}& P \, \sum_{t=1}^K \smhat{\rho}_{t-1} + \frac{4 \, \epsilon}{1 - \epsilon} \sum_{t=1}^{K} r_{t-1} \, \smhat{z}(S^t) \tag*{(def.\ of $\bar{\rho}_{t-1}$)} \\
        \leq{}& P \, \smhat{z}(S) + \frac{4 \, \epsilon}{1 - \epsilon} |\smhat{T}| \, \smhat{z}(S)\\
        \leq{}& P \, \smhat{z}(S) + \frac{4 \, \epsilon}{1 - \epsilon} k \, \smhat{z}(S).
    \end{align*}
    Combining this with \cref{eq:that} yields
    \begin{equation*}
        \smhat{z}(\smhat{T}) \leq \left(P + 1 + \frac{4 \, \epsilon}{1 - \epsilon} k\right) \, \smhat{z}(S). \qedhere
    \end{equation*}
\end{proof}
 
\section{Submodular Objectives}\label{sec:models}
We propose three models for the submodular effects of competition between migrants.
Each of them makes different assumptions about how migrants find their jobs and, thus, about how they compete with each other. In each model we are interested in the \emph{expected employment function} that it induces, i.e., the function that takes as input agent-locality pairs, and outputs the expected number of agents that find employment (in each case this function depends on parameters of the model). 

\subsection{The Retroactive-Correction Model}
The easiest of our models generalizes the additive model used by \citet{BFHD+18}, by retroactively correcting employment success for competition.
We assume that the agents $N$ can be partitioned into disjoint professions $\prof$, and that only agents of the same profession compete for jobs.
Each agent $\agen$ has a probability $\prob_{\agen \loca}$ of finding employment in locality $\loca$.

In the additive model, one can imagine each migrant's job search as a coin with a bias of this probability; the employment is the number of successful coin flips, and, therefore, the expected employment is just the sum of probabilities $\prob_{\agen \loca}$ over all matched pairs of agents $\agen$ and localities $\loca$.
In this model, however, the coin flip only simulates an agent's attempt to qualify for being hired, not her chance of actually landing a job, since the latter is influenced by other agents.
An agent might qualify by means of language acquisition, by transferring degrees, through further professional training, or by choosing a good way to present herself to prospective employers.
The actual employment at a locality $\loca$ and in a profession $\prof$ is obtained by applying a concave and monotone \emph{correction function} $\corr_{\loca \prof} : \mathbb{N} \to \mathbb{R}_{\geq 0}$ to the number of qualifying agents.
Total employment is computed by summing up employment over all localities and professions; the submodular function to optimize for is the expected employment generated by this process.
While there is no direct competition between agents of different professions, a matching algorithm cannot simply treat them in isolation since different professions have to share the locality caps.

An easy example for a correction function might be $x \mapsto \min(x, c)$ for some constant $c$.
This models the scenario where there are $c$ jobs in that locality and profession, and where the job market manages to bring all qualifying agents into work, up to the hard cap of $c$.
Other functions might slow down their growth before hitting the cap, simulating inefficiencies in the job market.

\begin{observation}
The expected employment function induced by the retroactive-correction model is submodular. 
\end{observation}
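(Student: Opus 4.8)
The plan is to exploit two facts: monotonicity and submodularity are both preserved under nonnegative linear combinations, and the expected employment decomposes additively over (locality, profession) pairs. Writing $\func(\Grou) = \sum_{\loca,\prof} \func_{\loca\prof}(\Grou)$, where $\func_{\loca\prof}(\Grou) = \mathbb{E}[\corr_{\loca\prof}(X_{\loca\prof}(\Grou))]$ and $X_{\loca\prof}(\Grou)$ denotes the random number of qualifying agents of profession $\prof$ that $\Grou$ matches to $\loca$, it suffices to show that each $\func_{\loca\prof}$ is monotone and submodular. Moreover $\func_{\loca\prof}$ depends on $\Grou$ only through the set of agents of profession $\prof$ matched to $\loca$, so when checking the diminishing-returns inequality for $\Grou \subseteq \GrouAlt$ and an added pair $(\agen',\loca')$, the only case needing work is $\loca' = \loca$ and $\agen' \in \prof$; otherwise both marginal gains equal $0$.

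The next step is to set up the natural coupling of the random process. Associate with every pair $(\agen,\loca)$ an independent $\mathrm{Bernoulli}(\prob_{\agen\loca})$ coin, and let $X_{\loca\prof}(\Grou)$ be the sum of the coins over the pairs $(\agen,\loca) \in \Grou$ with $\agen \in \prof$. Fix $\Grou \subseteq \GrouAlt$ and $\grou = (\agen',\loca)$ with $\agen' \in \prof$ and $\grou \notin \GrouAlt$. Then $X \coloneqq X_{\loca\prof}(\Grou)$, the sum $Z$ of the coins over the pairs $(\agen,\loca) \in \GrouAlt \setminus \Grou$ with $\agen \in \prof$, and the coin $\coin$ of $\grou$ are mutually independent and nonnegative, with $X_{\loca\prof}(\GrouAlt) = X + Z$, $X_{\loca\prof}(\Grou \cup \{\grou\}) = X + \coin$, and $X_{\loca\prof}(\GrouAlt \cup \{\grou\}) = X + Z + \coin$. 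Hence the two marginal gains to compare are $\mathbb{E}[\corr_{\loca\prof}(X+\coin) - \corr_{\loca\prof}(X)]$ and $\mathbb{E}[\corr_{\loca\prof}(X+Z+\coin) - \corr_{\loca\prof}(X+Z)]$.

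Finally I would reduce to discrete concavity. Since $\coin \in \{0,1\}$, conditioning on $\coin$ makes both differences vanish when $\coin = 0$, so it remains to show $\mathbb{E}[\corr_{\loca\prof}(X+1) - \corr_{\loca\prof}(X)] \geq \mathbb{E}[\corr_{\loca\prof}(X+Z+1) - \corr_{\loca\prof}(X+Z)]$. Concavity of $\corr_{\loca\prof}$ on $\mathbb{N}$ means $n \mapsto \corr_{\loca\prof}(n+1) - \corr_{\loca\prof}(n)$ is nonincreasing, so $\corr_{\loca\prof}(x+1) - \corr_{\loca\prof}(x) \geq \corr_{\loca\prof}(x+z+1) - \corr_{\loca\prof}(x+z)$ pointwise for all integers $x,z \geq 0$; taking expectations over the independent $X$ and $Z$ gives the inequality. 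Monotonicity of $\func_{\loca\prof}$ is immediate, since adding a pair only adjoins a coin, which can only increase $X_{\loca\prof}$, and $\corr_{\loca\prof}$ is monotone; normalization follows from $X_{\loca\prof}(\emptyset) = 0$ once one notes that one may assume $\corr_{\loca\prof}(0) = 0$ (subtracting the constant $\corr_{\loca\prof}(0)$ affects neither submodularity nor monotonicity).

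I expect the only genuine subtlety to be the coupling in the second step: one must recognize that the randomness governing the larger matching $\GrouAlt$ is obtained from that of $\Grou$ by adjoining fresh independent coins, so that the same random variable $X$ appears on both sides and the comparison collapses to the pointwise statement about the decreasing differences of $\corr_{\loca\prof}$. The fact that the freshly added coin is $\{0,1\}$-valued, and therefore factors out cleanly after conditioning, is what keeps even that step short; the decomposition over $(\loca,\prof)$ pairs and the closure of submodularity under nonnegative sums are routine.
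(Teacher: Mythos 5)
Your proposal is correct and follows essentially the same route as the paper: decompose the expected employment over (locality, profession) pairs, condition on the coin flips so that the comparison reduces to the nonincreasing differences of the concave correction function, and recover the unconditional inequality as a convex combination. Your version is a somewhat more explicit rendering of the same argument (handling general $\Grou \subseteq \GrouAlt$ directly and noting the harmless normalization $\corr_{\loca\prof}(0)=0$), but there is no substantive difference.
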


To see this, it is enough to show submodularity for a single locality $\loca$ and profession $\prof$ since a sum of submodular functions is submodular.
Because all agent-locality pairs have $\loca$ as the second component, we can identify them with their agents.
For a set $\Grou$ of agents and two other agents $\grou$ and $\groualt$, we need to show that the marginal contribution of $\grou$ is at least as high with respect to $\Grou$ as it is with respect to $\Grou \cup \{\groualt\}$.
Fix the coin flips for all agents. If $\grou$ failed, her marginal contribution is zero in both cases. If $\groualt$ failed, the marginal contribution of $\grou$ is the same whether $\groualt$ is present or not.
If both succeed, the marginal contribution must be at least as high with respect to the smaller set than with respect to the larger one by the concavity of $\corr_{\loca \prof}$.
Since employment in the case without fixed coin flips can be seen as a convex combination of employment in the cases with fixed coin tosses, submodularity is preserved.

\subsection{The Interview Model}
In the second model, we again partition agents by profession.
In contrast to the previous model, a migrant's employment success is not determined by a single hurdle of qualification but through a sequence of applications to individual jobs.
Each locality $\loca$ has a certain number $\jobn_{\loca \prof}$ of jobs for profession $\prof$, and each agent $\agen \in N$ has a probability $\prob_{\agen\loca}$ of being accepted at a particular job of their profession at locality $\loca$.

Employment is calculated individually for each locality $\loca$ and profession $\prof$.
Initially, there are $\jobn_{\loca \prof}$ many jobs available.
Then, we iterate over agents of profession $\prof$ matched with locality $\loca$ in an order selected uniformly at random.
For agent $\agen$, we throw $\prob_{\agen\loca}$-biased coins until we either hit success or until we have failed as many times as the number of available jobs.
Each of these coin tosses represents the agent applying for one of the remaining jobs, where each application succeeds with probability $\prob_{\agen\loca}$.
If one of the tosses succeeds, this agent is counted as employed, and the process continues with one less job available.
Else, the agent is not employed, and the number of available jobs remains the same for the next agent.
The utility of a matching is the sum over localities and professions of the expected number of employed agents in that locality and profession.

\begin{theorem}
\label{thm:submod}
The expected employment function induced by the interview model is submodular. 
\end{theorem}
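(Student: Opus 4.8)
The plan is to follow the template of the Observation's proof: reduce to a single locality--profession pair, express the expected employment as an expectation of a simpler function, and show that the operations in play preserve submodularity. A sum of monotone submodular functions is monotone submodular, and the expected employment for a pair $(\loca,\prof)$ depends only on the agents matched there, so it suffices to fix one pair. Identifying the agent--locality pairs with their agents, let $g_m(\Grou)$ be the expected number of employed agents when the process is run on the agent set $\Grou$ with $m$ jobs, the expectation taken over both the random processing order and the coin flips; the goal is to show that $g_{\jobn_{\loca\prof}}$ is monotone submodular and normalized. It is convenient to recast the process: agent $\agen$ is, independently, \emph{compatible} with each of the $m$ jobs with probability $\prob_{\agen\loca}$, and $g_m(\Grou)$ is the expected size of the matching obtained by greedily matching the agents of $\Grou$, in uniformly random order, each to an arbitrary still-available compatible job. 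Note that, unlike in the retroactive-correction model, one may \emph{not} condition on the coin outcomes and argue submodularity pointwise: for a fixed realization of the coins, the number of employed agents is not a submodular function of $\Grou$ (a three-agent, two-job instance already violates it), so the randomness of the coins has to be used.

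I would prove submodularity by induction on the number of jobs $m$, with $g_0 \equiv 0$ the trivial base case. For the inductive step, condition on the identity of the agent $a$ that fills the \emph{first} job. Until some agent is employed, the number of available jobs stays $m$, so agents processed before $a$ are never employed and do not change the job count; hence, after $a$ is employed, the process is a fresh copy of the $(m-1)$-job process run on exactly the agents that arrive after $a$ --- and, conditioned on that set of agents, its internal order is still uniform and its coins are still fresh. Moreover, the probability that the first job is filled at all equals $1 - \prod_{\agen \in \Grou}(1-\prob_{\agen\loca})^m$, a classical monotone, normalized, submodular ``one minus a product'' function. Putting these together,
\[ g_m(\Grou) \;=\; \Bigl(1 - \prod_{\agen\in\Grou}(1-\prob_{\agen\loca})^m\Bigr) \;+\; \mathbb{E}\bigl[\,g_{m-1}(\Grou_{>a})\,\bigr], \]
where $\Grou_{>a}$ denotes the random set of agents arriving after $a$ (and $\emptyset$ if no agent is employed). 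It therefore remains to show that $\Grou \mapsto \mathbb{E}[g_{m-1}(\Grou_{>a})]$ is monotone submodular whenever $g_{m-1}$ is.

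This last step is the main obstacle, since the tail set $\Grou_{>a}$ depends on $\Grou$ through $a$, so submodularity of $g_{m-1}$ cannot be invoked directly. I would handle it by fixing the arrival order (averaging over orders only at the very end), peeling off the first agent in that order, and checking submodularity by a short case analysis on how the two new elements $\grou, \groualt$ sit relative to that first agent. All cases collapse to the inductive hypothesis except the one in which a new element is processed before everything in $\Grou$; that case reduces to a ``cross-monotonicity'' statement --- the marginal value of an agent is nondecreasing in the number of jobs, i.e.\ $\Grou \mapsto g_m(\Grou) - g_{m-1}(\Grou)$ is monotone --- plus the elementary fact that one additional job can employ at most one additional agent. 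I expect these auxiliary properties, together with plain monotonicity of each $g_m$, to fall out of the same induction, so that the whole argument is a single simultaneous induction on $m$. Summing the resulting monotone submodular functions over all locality--profession pairs then yields the theorem.
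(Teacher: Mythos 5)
Your setup is sound: the reduction to a single locality--profession pair, the per-job recasting of the coin flips, and the observation that conditioning on all coins and arguing pointwise (as in the retroactive-correction model) fails are all correct, and the identity $g_m(\Grou) = \bigl(1 - \prod_{\agen\in\Grou}(1-\prob_{\agen\loca})^m\bigr) + \mathbb{E}\bigl[g_{m-1}(\Grou_{>a})\bigr]$ is valid. The gap is the step you reduce everything to: the function $h(\Grou) \coloneqq \mathbb{E}[g_{m-1}(\Grou_{>a})]$ is \emph{not} submodular, even when $g_{m-1}$ is. Take $m=2$ jobs and two agents $\grou,\groualt$ with success probabilities in $(0,1)$. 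Whenever at most one agent is present, $\Grou_{>a}=\emptyset$, so $h(\emptyset)=h(\{\grou\})=h(\{\groualt\})=0$; but $h(\{\grou,\groualt\})>0$, since with positive probability the earlier agent fills the first job and the later one then contributes $g_1(\{\cdot\})>0$. The marginal contribution of $\grou$ to $h$ thus jumps from $0$ to a positive number when $\groualt$ is added. Your decomposition splits $g_m$ into a submodular term and a term with supermodular violations that happen to cancel against it, so no termwise argument can succeed, and the announced ``remaining task'' is a false statement.

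The case analysis you sketch can be rescued if it is carried out on $g_m$ itself, peeling off the first agent $w$ of the fixed order via $g_m(\Grou) = (1-(1-\prob_{w\loca})^m)(1+g_{m-1}(\Grou\setminus\{w\})) + (1-\prob_{w\loca})^m\, g_m(\Grou\setminus\{w\})$; the case $w\in\Grou$ then closes by a double induction, and both ``new element first'' cases reduce exactly to your cross-monotonicity statement. But that is where the real work lives, and it does not simply fall out of the same induction: when you peel the first agent inside the proof of $g_m(\Grou\cup\{\grou\}) - g_m(\Grou) \geq g_{m-1}(\Grou\cup\{\grou\}) - g_{m-1}(\Grou)$, the coefficients $(1-\prob_{w\loca})^m$ and $(1-\prob_{w\loca})^{m-1}$ no longer match, and the step additionally requires concavity of $g_m(\Grou)$ in $m$ together with $g_m - g_{m-1} \leq 1$ --- i.e., monotonicity and convexity in the number of jobs of the expected number of leftover positions. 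These are precisely the facts the paper proves by a separate grid/coupling argument (its \cref{lem:monoconv}), not by the submodularity induction, before combining them with a bookkeeping argument showing that the job counts entering the random suffix differ by at most one across the four scenarios. So your proposal is a genuinely different architecture that might be completable, but as written its central lemma is false and the auxiliary properties that would replace it are asserted rather than proved.
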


In contrast to our other models, it is quite nontrivial to establish submodularity in the interview model. The proof of Theorem~\ref{thm:submod} is relegated to Appendix~\ref{app:submod}.

\subsection{The Coordination Model}
Our final model goes beyond the strict separation between professions, and assumes that employment for all agents in the same locality is coordinated.
Similarly to the previous model, each locality $\loca$ has a number $\jobn_{\loca}$ of jobs, and each agent $\agen$ has a certain probability $\prob_{\agen \job}$ of being compatible with a specific job $\job$.
These probabilities might be induced by a strict partition into professions, but can be more fluid with people being more or less qualified for jobs closer to or further from their areas of expertise.
Whereas, in the interview model, a migrant directly takes a job as soon as she is found eligible, we now determine compatibility between all migrants and jobs in $\loca$, and we coordinate the assignment such that employment is maximized.

To calculate employment at a locality $\loca$, we flip a coin with bias $\prob_{\agen \job}$ for each agent $\agen$ matched to $\loca$ and each job $\job$ at $\loca$.
By drawing an edge between each pair $(\agen, \job)$ whose coin flip succeeded, we obtain a bipartite graph.
Employment at this locality is the size of a maximum matching, which corresponds to the highest number of agents that can be given work.
Again, the total utility of a matching is the sum of expected employment over localities.

\begin{theorem}
\label{thm:coordination}
The expected employment function induced by the coordination model is submodular. 
\end{theorem}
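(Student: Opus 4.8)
The plan is to follow the same template as the observation for the retroactive-correction model. Since a sum of submodular functions is submodular, and the expected employment function induced by the coordination model is a sum, over localities $\loca$, of a per-locality contribution that depends only on the set of agents matched to $\loca$, it suffices to fix one locality $\loca$ with $\jobn_{\loca}$ jobs, identify each agent-locality pair with its agent, and show that $\Grou \mapsto g(\Grou)$ is monotone, normalized, and submodular, where $\Grou$ ranges over sets of agents that could be matched to $\loca$ and $g(\Grou)$ is the expected size of a maximum matching in the random bipartite graph between $\Grou$ and the $\jobn_{\loca}$ jobs.

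As in the earlier observation, the idea is to condition on the coin flips. Introduce, for every agent $\agen \in N$ and every job $\job$ at $\loca$, an independent coin $\coin_{\agen\job}$ of bias $\prob_{\agen\job}$; including coins for agents not matched to $\loca$ is harmless, since they never affect the outcome. Each realization of these coins determines a fixed bipartite graph $H$ on the vertex set $N$ together with the jobs, where the edge $(\agen, \job)$ is present exactly when $\coin_{\agen\job}$ succeeded, and then $g(\Grou) = \mathbb{E}\bigl[\nu\bigl(H[\Grou]\bigr)\bigr]$, with $\nu(H[\Grou])$ the maximum matching size in the subgraph of $H$ induced on $\Grou$ and all the jobs.

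It therefore remains to prove the purely combinatorial claim that, for a \emph{fixed} graph $H$, the function $\Grou \mapsto \nu(H[\Grou])$ on subsets $\Grou \subseteq N$ is monotone, normalized, and submodular; I expect this to be the main (albeit standard) step. The cleanest route is to recognize this function as the rank function of the transversal matroid of $H$ with ground set $N$: a set of agents is independent exactly when all of them can be simultaneously matched to distinct jobs, and the rank of $\Grou$ --- the size of its largest matchable subset --- is precisely $\nu(H[\Grou])$; matroid rank functions are monotone, normalized, and submodular. A self-contained alternative is an augmenting-path argument: adding one agent raises the maximum matching size by at most one, so every marginal lies in $\{0, 1\}$, and one checks that if an agent $\grou \notin \GrouAlt$ yields a gain with respect to the larger set $\GrouAlt \supseteq \Grou$, it already yields a gain with respect to $\Grou$, by following an augmenting path through $\grou$ and restricting attention to the part of it lying inside $H[\Grou \cup \{\grou\}]$. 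Granting this claim, the proof closes exactly as in the retroactive-correction observation: $g$ is an expectation --- hence a convex combination over the finitely many coin outcomes --- of monotone, normalized, submodular functions, and is therefore itself monotone, normalized, and submodular; summing these per-locality functions over all localities yields the theorem.
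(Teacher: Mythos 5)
Your proposal is correct and follows essentially the same route as the paper: decompose the objective as a sum over localities, condition on the coin flips so that the expectation becomes a convex combination, and reduce to the fact that the maximum-matching size of a fixed bipartite graph is a submodular function of the agent-side vertex set. The only difference is that the paper cites this last fact as a lemma due to Rabanca, whereas you supply a proof of it via the transversal-matroid rank function, which is a standard and valid justification.
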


The relatively simple proof of the theorem builds up on a result by \citet{rabanca16} and appears in \cref{app:coordination}.
 
\section{Simulations}\label{sec:simulations}
The approximation results of \cref{sec:methods} provide a way of respecting competition effects when matching migrants to localities.
But when is it worth adopting such an approach?
After all, besides the potential benefits, embracing submodularity also entails an increase in modeling complexity, and it forces us to abandon the efficient optimization tools that are applicable to additive optimization.

If one chooses not to deal with these drawbacks, one can always just ignore submodularity:
For each agent and locality, one may estimate the probability of her finding employment at this place in the absence of all other agents. One can then optimize the additive function induced by the aformentioned probabilities, say, by formulating the problem as an integer linear program, and hope that the result would do well under the true objective function.
The approach of \citet{BFHD+18} can be understood as doing exactly this.
Our goal is to show that --- when competition effects are indeed present, and the objective function is submodular --- accounting for submodularity is almost always the preferable choice and can lead to significantly better outcomes than the additive approach described above.
To this end, we empirically evaluate the two approaches on all three models from \cref{sec:models}.

Our simulation code is written in Python; we use Gurobi for additive optimization and IGraph for computing maximum bipartite matchings.
All code is open source and available for reproduction at \glink{https://github.com/pgoelz/migration}.
We provide additional simulation results in \cref{app:simulations}, where we also link to IPython notebooks with the exact setup of all of our experiments.

For increased performance, we reuse estimations of expected employment.
When a model is queried with a matching that puts the same agents of profession $\prof$ ---  or just the same agents in case of the coordination model --- in the same locality $\loca$ as a previous matching, the previous estimation of expected employment is used.
Since all these estimations are obtained by averaging many random simulations, this should not significantly influence the experiments.
Furthermore, since the additive algorithm is not affected, and since all final utility values are computed without memoization, this only disadvantages the greedy algorithm, strengthening our findings.

Due to the high number of parameters in all of our models, we adopt a standard setting that is applicable across all of them.
Let $N$ be a set of 100 agents, split equally between two professions, 1 and 2.
While we vary the number of localities, we distribute 100 jobs --- 50 per profession --- randomly over the localities, ensuring that each locality has at least one job.
Furthermore, we set the capacity of each locality to exactly its number of jobs.
Finally, we average 1\,000 simulations to estimate expected employment whenever our models are queried with a matching.

\begin{figure*}[htb]
\centering
\includegraphics[width=\textwidth]{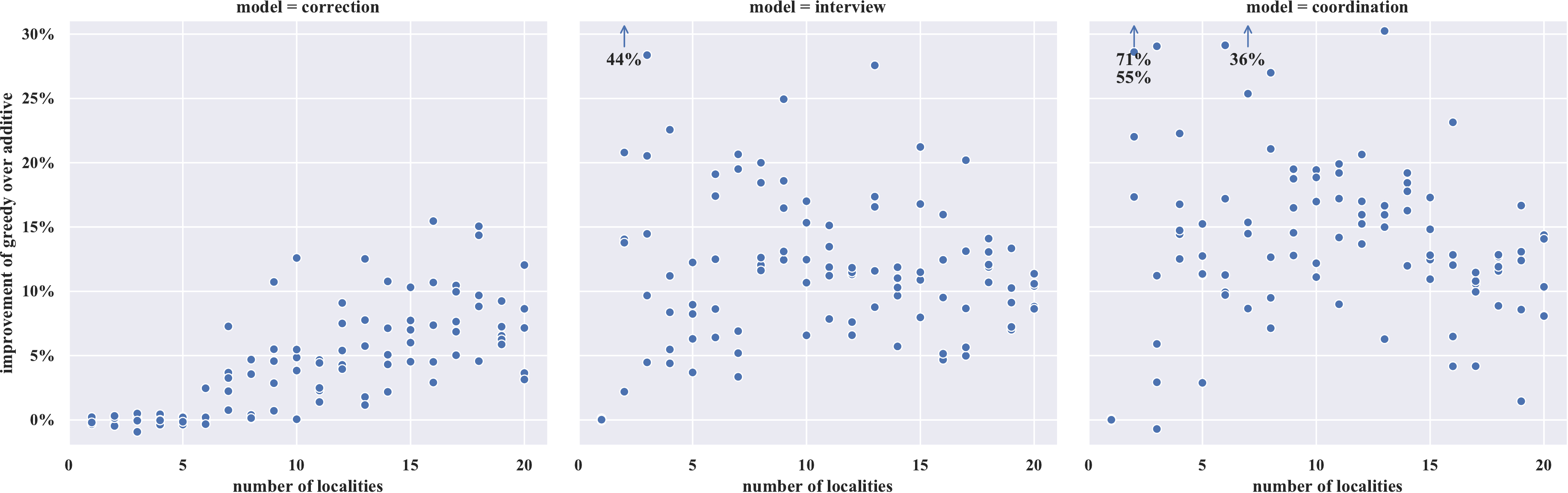}
    \caption{Increase in utility by using greedy instead of additive by number of localities. Each dot represents a new instantiation of the standard setting for the corresponding model, to which we applied both algorithms. Outliers are indicated by arrows.}
\label{fig:numlocalities}
\end{figure*}

We quickly sketch how this setting translates into the idiosyncrasies of our individual models:
In the retroactive-correction model, for each agent $\agen$, we uniformly select a probability between 0 and 1, which is used as $\prob_{\agen \loca}$ for all localities $\loca$.
We choose to keep these probabilities equal between localities to make sure that our samples will contain agents that have significantly different overall chances of being employed, an aspect that we would expect to see in any real dataset.
As the correction function for a locality $\loca$ and profession $\prof$, we choose $x \mapsto \min (x, c)$, where $c$ is the number of available jobs in $\loca$ and $\prof$.
Note that, while the cap ensures that the number of agents in a locality is at most the total number of jobs, the number of agents of a certain profession can exceed the number of jobs in that profession, which lets the cap $c$ kick in.
Likewise, in the interview model, we choose $\prob_{\agen \loca}$ uniformly at random and equal over all localities.
Finally, we induce the compatibility probabilities in the coordination model by the professions, i.e., each agent has a single compatibility probability chosen uniformly at random for all jobs of her profession and is incompatible with all other jobs.

As shown in \cref{fig:numlocalities}, the greedy algorithm outperforms additive optimization in nearly all cases, frequently increasing employment by 10\% and more.
Given the wide range of scenarios that are randomly generated, it is striking that the greedy heuristic virtually never leads to worse employment than additive maximization.
This trend persists over all settings that we simulated.
While the advantage of the greedy algorithm does not seem as pronounced for small numbers of localities in the retroactive-correction model, it leads to strong improvements over all locality numbers in the other two models.

\looseness=1
In the following, we set the number of localities to 10 and vary the number of agents $n$ --- still equally split between professions --- instead.
Each locality has a capacity of $n/10$ and has $n/10$ jobs, where the $n/2$ jobs of each profession are randomly distributed within these constraints.
\Cref{fig:numagents} shows that the greedy algorithm leads to impressive improvements across all simulated numbers of agents, again especially pronounced in the interview and coordination models.
While the gains become smaller for high numbers of agents in the correction model, the reason is innocuous: Looking at the absolute utilities instead of the ratio, we see that in these scenarios both additive and greedy get very close to an employment of half of the agents, which is the expected number of qualifying agents and thus optimal (see \cref{app:num_agents}).

\begin{figure}[htb]
\centering
    \includegraphics[width=.473\textwidth]{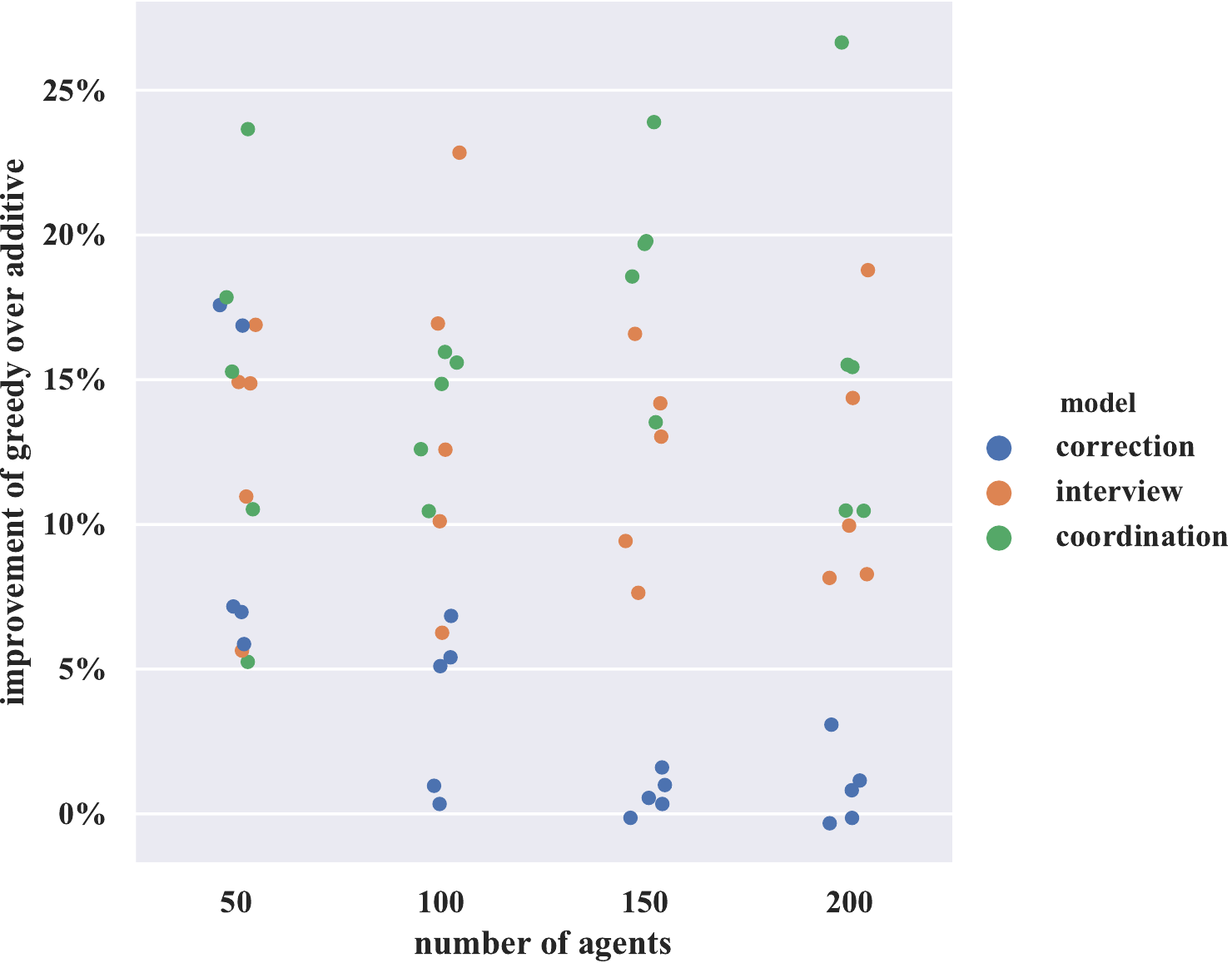}
    \caption{Improvements for different numbers of agents split across 10 localities.}
    \label{fig:numagents}
\end{figure}

In \cref{app:num_professions,app:job_availability,app:specialization}, we vary further parameters, namely the number of professions, the level of specialization of localities, and the availability of jobs with respect to fixed capacities.
In general, we find that the improvements from the greedy optimization persist across these parameter settings.

Note that the results of the greedy algorithm provide a \emph{lower bound} on what can be achieved when submodularity is taken into account.
Other heuristics with access to the submodular function might be able to further improve employment while retaining the low query complexity of the greedy algorithm.
 
\section{Discussion}\label{sec:discussion}
The simulations in the previous section revealed that employment in all of our models can be improved by accounting for submodularity.
Since this effect appeared consistently between three significantly different models of how migrants find employment, we conjecture that similar gains can be obtained in practice.
The next logical step is to measure competition effects on real migration data.

Alas, it seems to be very difficult to obtain datasets with the necessary amount of detail.
For instance, the \emph{Database on Immigrants in OECD Countries} (DIOC)\footnote{\url{https://www.oecd.org/els/mig/dioc.htm}} contains data on a vast number of migrants in different countries.
However, while File~D contains information on the level of educational attainment and on the migrants' current occupations, there is no information about the field of education or pre-migration occupation.
As a result, it is unclear whether an unemployed migrant was looking for employment at all and, if so, which job market she participated in.
An additional problem is that DIOC does not track migrants who return to their home countries as a result of unemployment.

By contrast, longitudinal studies such as the German \emph{Socio-Economic Panel}\footnote{\url{https://www.diw.de/en/diw\_02.c.222517.en/data.html}} contain extensive information about the educational background of individuals.
What is more, they follow individual migrants over an extended period of time and therefore track their success on the labor market in perfect detail.
Unfortunately, the longitudinal studies that we saw observe too few individuals spread over too many localities.
To directly observe competition, we need data on a large fraction of the migrants competing for jobs in a locality, from multiple localities.

Even the paper by \citet{BFHD+18}, for which the authors had access to sensitive migration data, conspicuously does not use any features relating to the refugees' field of occupation.
Indeed, despite the predictive power of such features for employment success in a locality, at least one major refugee-resettlement agency in the US does not track this information at all.
We have reason to believe that the same is true for other resettlement agencies as well.
A clear policy recommendation, therefore, is to compile the records in question and to make them available for research.
Hopefully, data on present migration will help to better direct flows of migration in the future --- benefitting both the migrants and the societies they join.
 
\section*{Acknowledgements}
This work was partially supported by the National Science Foundation under grants IIS-1350598, IIS-1714140, CCF-1525932, and CCF-1733556; by the Office of Naval Research under grants N00014-16-1-3075 and N00014-17-1-2428; and by a Sloan Research Fellowship and a Guggenheim Fellowship.
We would like to thank Lucas Le\-o\-pold for his helpful pointers on navigating migration datasets.

\bibliographystyle{aaai}
\bibliography{abb,ultimate,misc}

\newpage
\appendix

\onecolumn

\begin{center}
\LARGE \bf Appendix: Migration as Submodular Optimization
\end{center}

\vspace{0.3in}

\section{Proof of Theorem~\ref{thm:submod}}
\label{app:submod}

Instead of directly proving the expected employment to be submodular, it is more natural to show that the expected number of positions that remain open is supermodular.
Since a sum of supermodular functions is supermodular, it suffices to show this for a single locality $\loca$ and profession $\prof$.
Since all agent-locality pairs will have the same locality, we can treat the set of pairs as a set of agents.
Furthermore, we will show supermodularity for an arbitrary fixed order over all agents, which implies supermodularity for a random order.

For a set $\Grou$ of agents of profession $\prof$, let $\ofunc(\Grou)$ denote how many of the $\jobn_{\loca \prof}$ jobs remain open in expectation if $\Grou$ is the set of agents of profession $\prof$ matched with $\loca$, considered in the order fixed above.
We need to show that for all sets $\Grou$ of agents and distinct agents $\grou, \groualt \notin \Grou$,
\begin{equation}
\label{eq:interviewsuper}
\ofunc(\Grou \cup \{\grou, \groualt\}) - \ofunc(\Grou \cup \{\groualt\}) \geq \ofunc(\Grou \cup \{\grou\}) - \ofunc(\Grou).
\end{equation}
How the process behaves at an agent $\agen$ is entirely determined by the number $\jobn_{\agen}$ of jobs still available when it is her turn and the index $\coin_{\agen}$ of her first success in a sequence of coin flips.
More specifically, it only matters whether $\coin_{\agen} \leq \jobn_{\agen}$ (she gets a job) or $\coin_{\agen} > \jobn_{\agen}$ (she fails to get any job).
Without loss of generality, all $\coin_{\agen}$ are at most $\jobn_{\loca \prof}$ since larger values of $\coin_{\agen}$ will lead to the same behavior of never getting a job.
We show that \cref{eq:interviewsuper} holds if the expectations are conditioned on fixed values of $\coin_{\agen}$ for all agents $\agen$ up to $\grou$ or $\groualt$, whoever comes later in the order.
We will refer to all agents up to the later of $\grou$ and $\groualt$ as the \emph{prefix}, and to all later agents as the \emph{suffix}.
\Cref{eq:interviewsuper} follows from the previous proposition because the unconditional expectation is a convex combination of the expectations conditioned on these mutually exclusive and jointly exhaustive events.
Fixing the coin flips of the prefix allows us to reason about how many jobs will still be available at the first agent of the suffix, and how this differs depending on whether $\grou$ and $\groualt$ are present.
\Cref{eq:interviewsuper} will then follow from the fact that the expected number of remaining jobs is monotone and convex in the number of jobs available at the first agent of the suffix.

\subsection{Monotonicity and Convexity for a Fixed Set of Agents}
Let $\Agen$ be a set of agents $\agen$, each with their chance $\prob_{\agen\loca}$ of success when applying for a job.
Set $\tof(n)$ for the number of positions remaining open after starting from $n$ available jobs and having all agents in $\Agen$ apply in a fixed order.

\begin{lemma}
\label{lem:monoconv}
    $\tof$ is monotone and is convex (i.e., $\tof(n+2) - \tof(n+1) \geq \tof(n+1) - \tof(n)$ for all $n \in \mathbb{N}$).
\end{lemma}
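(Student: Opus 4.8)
The plan is to induct on $|\Agen|$, peeling off the \emph{first} agent in the fixed order. When $\Agen = \emptyset$ no application happens, so $\tof(n) = n$ is linear and therefore both monotone and (weakly) convex; this also covers the degenerate value $\tof(0)=0$. For the inductive step I would write $\Agen = \{\agen\} \cup \Agen'$ with $\agen$ the first agent, set $\prob \coloneqq \prob_{\agen\loca}$, and let $\tof'$ be the open-positions function of $\Agen'$ in the inherited order, which is monotone and convex by the induction hypothesis. With $n$ jobs on hand, agent $\agen$ secures one with probability $q_n \coloneqq 1 - (1-\prob)^n$ (passing $n-1$ jobs on to the rest) and otherwise passes on all $n$; hence
\[ \tof(n) \;=\; q_n\,\tof'(n-1) + (1-q_n)\,\tof'(n) \;=\; \tof'(n) - q_n\,\Delta(n), \qquad \Delta(n) \coloneqq \tof'(n) - \tof'(n-1). \]
The $n=0$ boundary needs no comment, since $q_0 = 0$ kills the otherwise undefined term $\Delta(0)$. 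The facts I would then use are: $\Delta(n) \ge 0$ (monotonicity of $\tof'$); $\Delta$ is nondecreasing (convexity of $\tof'$); $0 \le q_n \le 1$; and --- the crucial point --- $q_{n+1} - q_n = \prob(1-\prob)^n$ is nonincreasing in $n$, i.e.\ the one-step success probability $q_n$ is itself increasing and concave in $n$.

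Monotonicity of $\tof$ then drops out, since $\tof(n+1) - \tof(n) = (1-q_{n+1})\,\Delta(n+1) + q_n\,\Delta(n) \ge 0$. For convexity I would estimate, replacing $\Delta(n+2)$ by the smaller $\Delta(n+1)$ (legitimate because $1 - q_{n+2} \ge 0$) and $\Delta(n)$ by the larger $\Delta(n+1)$ (legitimate because $q_n \ge 0$),
\begin{align*}
\bigl(\tof(n+2) - \tof(n+1)\bigr) - \bigl(\tof(n+1) - \tof(n)\bigr)
&= (1-q_{n+2})\Delta(n+2) + q_{n+1}\Delta(n+1) - (1-q_{n+1})\Delta(n+1) - q_n\Delta(n) \\
&\ge \bigl[(1-q_{n+2}) + q_{n+1} - (1-q_{n+1}) - q_n\bigr]\,\Delta(n+1) \\
&= \bigl(2q_{n+1} - q_{n+2} - q_n\bigr)\,\Delta(n+1),
\end{align*}
and then $2q_{n+1} - q_{n+2} - q_n = (q_{n+1}-q_n) - (q_{n+2}-q_{n+1}) = \prob(1-\prob)^n - \prob(1-\prob)^{n+1} = \prob^2(1-\prob)^n \ge 0$, which together with $\Delta(n+1)\ge 0$ completes the induction.

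The main obstacle is getting the reduction right rather than the algebra: one must peel the \emph{first} agent, so that the subproblem starts from a deterministic number of jobs rather than a random residual, and one must notice that it is precisely the concavity of the single-agent transition $n \mapsto q_n$ that promotes convexity of $\tof'$ to convexity of $\tof$ in the estimate above. The only genuine bookkeeping care is the $n=0$ edge case, which is harmless because $q_0 = 0$.
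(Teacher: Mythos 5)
Your proof is correct, but it takes a genuinely different route from the paper's. The paper never writes down a recursion: it encodes the entire process as a grid of independent coins (one per agent--job cell) with a deterministic skip rule, observes that the outcome distribution is invariant under re-ordering the evaluation as long as cells to the left and above are evaluated first, and then reads off monotonicity (a new last column cannot destroy an empty column) and convexity (conditioned on the first $n$ columns, column $n+2$ has at most as many coins to fail as column $n+1$) from a column-by-column evaluation. You instead induct on the number of agents, peeling the first one, which reduces everything to the one-step transition $\tof(n) = \tof'(n) - q_n\,\Delta(n)$ with $q_n = 1-(1-\prob)^n$; the decomposition is valid by independence of the coin flips, your algebra checks out, and the $n=0$ boundary is indeed neutralized by $q_0=0$. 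The two arguments buy different things: the paper's coupling is computation-free and isolates a reusable exchange-of-evaluation-order idea, whereas your induction pinpoints the mechanism --- convexity of $\tof'$ survives the transition precisely because the single-agent success probability $n \mapsto q_n$ is concave ($2q_{n+1}-q_{n+2}-q_n = \prob^2(1-\prob)^n \ge 0$), which is arguably more informative about \emph{why} the lemma is true and would tell you immediately which other transition kernels preserve convexity. One small presentational caveat: when you replace $\Delta(n+2)$ and $\Delta(n)$ by $\Delta(n+1)$ you are using the inductive convexity of $\tof'$ at two consecutive arguments, so for $n=0$ the second replacement involves the undefined $\Delta(0)$; you correctly note that its coefficient $q_0$ vanishes, but it is worth stating explicitly that the term $q_n\Delta(n)$ should simply be read as $0$ there rather than bounded.
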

\begin{proof}
    One way of visualizing the random process with expectation $\tof(n)$ is the following:
    Have a grid with $n$ columns and one row for each agent in $\Agen$, in their fixed order.
    The row corresponding to $\agen$ is filled with $\prob_{\agen\loca}$-biased coins.
    Proceed through the cells row by row, left to right.
    If the current row contains a flipped coin that was a success anywhere to the left of the current cell, remove the current coin from the grid and proceed.
    This corresponds to the case where agent $\agen$ already landed a job and does not apply for further jobs.
    Likewise, if the current column contains a successfully-flipped coin somewhere above the current cell, remove the current coin and proceed.
    This represents a job being already taken by a previous agent, which is why the current agent cannot apply there.
    Otherwise, flip the coin and proceed, simulating the current agent applying for the current job.

    Clearly, the total number of successful coin flips after proceeding through all cells can be interpreted as the resulting employment number.
    For this reason, $\tof(n)$ equals the expected number of columns without a successful coin flip.
    Note that the probabilities of different end states of the grid do not change if we proceed through the grid in a different order, as long as all cells to the left of and above a cell have been evaluated before dealing with the current cell.
    In particular, we may proceed column by column, top to bottom.

    In this interpretation, it is obvious that $\tof$ is monotone.
    Indeed, increasing its argument from $n$ to $n+1$ corresponds to adding a new column that will be evaluated after all others.
    Adding this column can increase the number of columns without successes but cannot decrease it.

    Similarly, we can now show that $\tof$ is convex.
    By linearity of expectation, the two terms $\tof(n+2) - \tof(n+1)$ and $\tof(n+1) - \tof(n)$ describe the probability that column $n+2$ and $n+1$ stays without a success, respectively.
    Fix some coin flips for the first $n$ columns.
    To stay without success, column $n+1$ must fail the coin flips in all rows that do not have a success in the first $n$ columns yet.
    In the worst case, if column $n+1$ stays without success, column $n+2$ has to satisfy the same condition.
    But if column $n+1$ has a success, $n+2$ needs to fail on one less coin to stay without success.
    This shows that, conditional on the coin flips of the first $n$ columns, column $n+2$ is at least as probable to stay without success as column $n+1$.
    The unconditional probabilities are obtained via a convex combination of the conditional probabilities, so the inequality persists.
\end{proof}

\subsection{Description of the Process with Fixed Coin Flips}

We now formalize the interviewing process for a fixed order and fixed coin flips in some prefix of agents.
A \emph{setting} consists of a sequence of integers $\coin_{\agen}$ for agents $\agen$ and a function $\tfunc : \mathbb{N} \to \mathbb{R}$.
As described earlier, $\coin_{\agen}$ is the index of the first successful coin flip for an agent $\agen$ in the prefix, capped at a large enough constant.
The function $\tfunc$ encapsulates all agents in the suffix, whose coin flips are not fixed, by giving the expected number of left-over jobs as a function of the number of jobs that are available when entering the suffix.
Such a setting can be complemented with a number $n$ of initially available jobs to form a \emph{program}.
Formally, settings $\sett$ and programs $\prog$ are defined by the following Backus-Naur forms:
\begin{align*}
    \sett &\Coloneqq [\tfunc] \mid m;\sett \\
    \prog &\Coloneqq \sett~\{n\} \mid r
\end{align*}
In the above, $n \in \mathbb{N}$ is used for the number of available jobs, $m \in \mathbb{N}_{>0}$ is used for the $\coin_{\agen}$, $r \in \mathbb{R}$ is the result of a function call, and $\tfunc$ is an externally defined function symbol $\tfunc : \mathbb{N} \to \mathbb{R}$ representing the suffix.

Programs can step according to the rules of our random process:
If the prefix is empty, we just evaluate the function $\tfunc$.
Otherwise, agent $\agen$ consumes a job iff $\coin_{\agen}$ is at most the number of available jobs.
We capture this by a stepping relation $\step$ between programs $\prog$, which is the smallest relation such that
\begin{itemize}
    \item if $r = \tfunc(n)$, then $[\tfunc]~\{n\} \step r$,
    \item if $n \geq m$, then $m;\sett~\{n\} \step \sett~\{n - 1\}$, and
    \item if $n < m$, then $m;\sett~\{n\} \step \sett~\{n\}$.
\end{itemize}

If we evaluate a program long enough, we reach a real number $r$, which is the expected number of remaining jobs after running the scenario on the number of available jobs.
To be able to directly speak about the final evaluation result of a program, we define the evaluation relation $\eval$ between programs and real numbers as the smallest one satisfying
\begin{itemize}
    \item $r \eval r$, and
    \item if $\prog \step \prog'$ and $\prog' \eval r$, then $\prog \eval r$.
\end{itemize}

\subsection{Lemmas on Evaluation with Fixed Coin Flips}
\begin{lemma}
    \label{lem:det}
    If we have $m_1; \dots; m_k; \sett~\{n\} \step^k \prog$ and $m_1; \dots; m_k; \sett'~\{n\} \step^k \prog'$, then $\prog = \sett~\{n'\}$ and $\prog' = \sett'~\{n'\}$ for a single value $n' \in \mathbb{N}$.
\end{lemma}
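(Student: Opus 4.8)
The plan is to prove \cref{lem:det} by induction on $k$, exploiting two structural facts about the operational semantics. First I would record the easy observations: for a program $m;\sett''~\{n\}$ the two coin rules are mutually exclusive (either $n \geq m$ or $n < m$), so $\step$ is a partial function; moreover such a step always produces $\sett''~\{n'\}$ for some $n'$, i.e.\ it strips exactly the leading coin $m$ and leaves the remainder of the setting untouched. Consequently, a $k$-step reduction of $m_1;\dots;m_k;\sett~\{n\}$ peels off exactly the $k$ prefix coins $m_1,\dots,m_k$ one per step, and halts at $\sett~\{n'\}$ — the setting $\sett$ (be it $[\tfunc]$ or a setting with further coins of its own) is still intact, and in particular we never "run past" the prefix into a bare real $r$. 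So the form $\prog = \sett~\{n'\}$ is forced for free; the content of the lemma is that the residual job count $n'$ is the same whether the trailing setting is $\sett$ or $\sett'$.

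For the induction, the base case $k = 0$ is immediate: the hypotheses read $\sett~\{n\} \step^0 \prog$ and $\sett'~\{n\} \step^0 \prog'$, so $\prog = \sett~\{n\}$, $\prog' = \sett'~\{n\}$, and $n' = n$ works. For the inductive step, consider the first step of $m_1;\dots;m_{k+1};\sett~\{n\}$: it is an application of one of the coin rules to $m_1$, and which rule applies — hence whether the count becomes $n-1$ (when $n \geq m_1$) or stays $n$ (when $n < m_1$) — depends only on the comparison of $m_1$ and $n$ and is completely insensitive to the tail $m_2;\dots;m_{k+1};\sett$. Writing $n_1$ for the resulting count, after one step we have $m_2;\dots;m_{k+1};\sett~\{n_1\}$, and the parallel primed computation yields $m_2;\dots;m_{k+1};\sett'~\{n_1\}$ with the \emph{same} $n_1$. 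Applying the induction hypothesis to the remaining $k$ steps (prefix $m_2,\dots,m_{k+1}$, starting count $n_1$) gives $\prog = \sett~\{n'\}$ and $\prog' = \sett'~\{n'\}$ for one common $n'$, as required.

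The proof is essentially routine; what needs care — rather than being a genuine obstacle — is the bookkeeping with the Backus–Naur structure, making sure that "exactly $k$ steps strip exactly the $k$-coin prefix" is stated precisely enough that both the form of the residual program and the value $n'$ are genuinely determined by $m_1,\dots,m_k$ and $n$ alone. With that in place, the determinism of $\step$ does all the work. This lemma is the workhorse for the later comparisons between two interview processes that share a coin-flip prefix but differ in their suffix: it lets us reduce both processes to the same number of jobs available at the moment the suffix begins.
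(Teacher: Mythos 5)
Your proof is correct and follows essentially the same route as the paper's: induction on $k$, with the inductive step splitting on whether $n \geq m_1$ or $n < m_1$ and observing that the resulting job count after the first step is independent of the trailing setting, so the induction hypothesis applies to both reductions in parallel. The preliminary remarks on determinism of $\step$ and on the prefix being stripped one coin per step are left implicit in the paper but are accurate and harmless.
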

\begin{proof}
    By induction on $k \in \mathbb{N}$.
    If $k = 0$, the claim follows immediately for $n' = n$.
    Else, let $k > 0$.
    Then, if $n \geq m_1$, we can determine the first step of the two terms as follows:
    \begin{align*}
        &m_1; \dots; m_k; \sett~\{n\} \step m_2; \dots; m_k; \sett~\{n - 1\} \step^{k-1} \prog \\
        &m_1; \dots; m_k; \sett'~\{n\} \step m_2; \dots; m_k; \sett'~\{n - 1\} \step^{k-1} \prog'.
    \end{align*}
    The claim about $\prog$ follows from the induction hypothesis.

    Else, if $n < m_1$, we enter the analogous case of
    \begin{align*}
        &m_1; \dots; m_k; \sett~\{n\} \step m_2; \dots; m_k; \sett~\{n\} \step^{k-1} \prog \\
        &m_1; \dots; m_k; \sett'~\{n\} \step m_2; \dots; m_k; \sett'~\{n\} \step^{k-1} \prog',
    \end{align*}
    and the claim again follows from the induction hypothesis.
\end{proof}

\begin{lemma}
    \label{lem:par}
    If we have $m_1; \dots; m_k; \sett~\{n\} \step^k \prog$ and $m_1; \dots; m_k; \sett'~\{n + 1\} \step^k \prog'$, then $\prog=\sett~\{n'\}$ and $\prog'=\sett'~\{n''\}$ for $n'$ and $n''$ such that $n' \leq n'' \leq n' + 1$.
\end{lemma}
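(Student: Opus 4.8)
The plan is to prove the claim by induction on $k$, mirroring the structure of the proof of \cref{lem:det} and reducing at each step to the shortened prefix $m_2; \dots; m_k$. For the base case $k = 0$ neither program takes a step, so $\prog = \sett~\{n\}$ and $\prog' = \sett'~\{n + 1\}$, and we may take $n' = n$, $n'' = n + 1$.

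For the inductive step with $k > 0$ I would determine the first step of each program by comparing $n$ (resp.\ $n+1$) with $m_1$. If $n \geq m_1$, then also $n + 1 > m_1$, so both programs consume a job and step to $m_2; \dots; m_k; \sett~\{n - 1\}$ and $m_2; \dots; m_k; \sett'~\{n\}$ respectively; since $(n - 1) + 1 = n$, the induction hypothesis applied to these two programs over the remaining $k - 1$ steps yields the desired $n' \leq n'' \leq n' + 1$. Symmetrically, if $n + 1 < m_1$, then also $n < m_1$, so neither program consumes a job, we reduce to $m_2; \dots; m_k; \sett~\{n\}$ and $m_2; \dots; m_k; \sett'~\{n + 1\}$, and the induction hypothesis applies verbatim.

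The one remaining case is $n < m_1 \leq n + 1$, i.e.\ $n + 1 = m_1$: the first program keeps its job count (stepping to $m_2; \dots; m_k; \sett~\{n\}$) while the second consumes one (stepping to $m_2; \dots; m_k; \sett'~\{n\}$), so after this single step both programs carry the \emph{same} number of available jobs $n$ and the same remaining prefix. Here I would invoke \cref{lem:det} on the remaining $k - 1$ steps to get $\prog = \sett~\{n'\}$ and $\prog' = \sett'~\{n'\}$ for a common value $n'$, so that $n' \leq n' \leq n' + 1$ holds trivially. Since this case split is exhaustive, the induction goes through. There is no genuinely hard step; the only subtlety worth flagging is the boundary case $n + 1 = m_1$, where the gap between the two job counts collapses to zero, forcing a switch from the induction hypothesis to \cref{lem:det}.
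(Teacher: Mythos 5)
Your proposal is correct and matches the paper's proof essentially step for step: the same induction on $k$, the same three-way case split on how $m_1$ compares to $n$ and $n+1$, and the same appeal to \cref{lem:det} in the boundary case $m_1 = n+1$ where the two job counts coincide. No differences worth noting.
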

\begin{proof}
    By induction on $k \in \mathbb{N}$.
    If $k = 0$, the claim is trivial with $n' = n$ and $n'' = n+1$.
    Else, we distinguish three cases, depending on whether $m_1 \leq n$, $m_1 = n + 1$, or $m_1 > n + 1$.

    If $m_1 \leq n$, we can uniquely determine the first step of both terms as
    \begin{align*}
        &m_1; \dots; m_k; \sett~\{n\} \step m_2; \dots; m_k; \sett~\{n - 1\} \step^{k-1} \prog \\
        &m_1; \dots; m_k; \sett'~\{n + 1\} \step m_2; \dots; m_k; \sett'~\{n\} \step^{k-1} \prog',
    \end{align*}
    and the claim follows from the induction hypothesis.

    If $m_1 = n + 1$, we know that
    \begin{align*}
        &m_1; \dots; m_k; \sett~\{n\} \step m_2; \dots; m_k; \sett~\{n\} \step^{k-1} \prog \\
        &m_1; \dots; m_k; \sett'~\{n + 1\} \step m_2; \dots; m_k; \sett'~\{n\} \step^{k-1} \prog',
    \end{align*}
    and the claim follows from \cref{lem:det}.

    The case of $m_1 > n + 1$ is analogous to the first one.
\end{proof}

\subsection{Supermodularity}
We are now ready to speak about the full scenario and about how it changes depending on the presence of two prefix elements $x$ and $y$.
Throughout this section, fix a monotone, convex function $\tfunc : \mathbb{N} \to \mathbb{R}$.
Additionally, let $1 \leq x < y$, let $m_1, \dots, m_y \in \mathbb{N}_{>0}$, and let $n \in \mathbb{N}$.
Fix four programs:
\begin{align*}
    \prog_{x,y} &\coloneqq m_1; \dots; m_y; [\tfunc]~\{n\} \\
    \prog_{\bar{x},y} &\coloneqq m_1; \dots; m_{x-1}; m_{x+1}; \dots; m_y; [\tfunc]~\{n\} \\
    \prog_{x,\bar{y}} &\coloneqq m_1; \dots; m_{y-1}; [\tfunc]~\{n\} \\
    \prog_{\bar{x},\bar{y}} &\coloneqq m_1; \dots; m_{x-1}; m_{x+1}; \dots; m_{y-1}; [\tfunc]~\{n\}.
\end{align*}
Additionally, let their evaluation results be
    \[\prog_{x,y} \eval r_{x,y} \qquad
    \prog_{\bar{x},y} \eval r_{\bar{x},y} \qquad
    \prog_{x,\bar{y}} \eval r_{x,\bar{y}} \qquad
    \prog_{\bar{x},\bar{y}} \eval r_{\bar{x},\bar{y}}. \]

\begin{lemma}
    \label{lem:xy}
    It holds that $r_{x,y} - r_{\bar{x},y} \geq r_{x,\bar{y}} - r_{\bar{x},\bar{y}}$ and $r_{x,y} - r_{x,\bar{y}} \geq r_{\bar{x},y} - r_{\bar{x},\bar{y}}$.
\end{lemma}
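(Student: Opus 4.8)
The plan is to step all four programs forward in lockstep, repeatedly peeling off a maximal common \emph{deterministic} prefix --- on which, by \cref{lem:det}, the job count reached is the same regardless of what follows, and, by \cref{lem:par}, two runs started one job apart stay at most one job apart --- until both claimed inequalities collapse to a single one-dimensional monotonicity statement about $\tfunc$.

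That statement concerns the function $g : \mathbb{N} \to \mathbb{R}$ defined by $m_\groualt;[\tfunc]\,\{n\} \eval g(n)$; by the stepping rules $g(n) = \tfunc(n-1)$ if $n \geq m_\groualt$, and $g(n) = \tfunc(n)$ otherwise. I would show that $g - \tfunc$ is non-increasing on $\mathbb{N}$. Indeed $(g-\tfunc)(n) = 0$ for $n < m_\groualt$, while for $n \geq m_\groualt$ it equals $-\bigl(\tfunc(n) - \tfunc(n-1)\bigr)$, which is $\leq 0$ because $\tfunc$ is monotone, and is non-increasing in $n$ because the successive increments $\tfunc(n) - \tfunc(n-1)$ are non-decreasing by convexity of $\tfunc$; and at the seam $n = m_\groualt$ the value drops from $0$ to $-\bigl(\tfunc(m_\groualt) - \tfunc(m_\groualt - 1)\bigr) \leq 0$.

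It then suffices to peel the four programs down to $g$ or $\tfunc$ evaluated at two nearby job counts. First peel the prefix $m_1;\dots;m_{\grou-1}$, common to all four; by \cref{lem:det} all four reach the same job count $a$. Now look at $m_\grou$ (present only in $\prog_{\grou,\groualt}$ and $\prog_{\grou,\bar{\groualt}}$). If $a < m_\grou$, agent $\grou$ consumes no job, so $\prog_{\grou,\groualt}$ becomes identical to $\prog_{\bar{\grou},\groualt}$ and $\prog_{\grou,\bar{\groualt}}$ identical to $\prog_{\bar{\grou},\bar{\groualt}}$; then $r_{\grou,\groualt} = r_{\bar{\grou},\groualt}$ and $r_{\grou,\bar{\groualt}} = r_{\bar{\grou},\bar{\groualt}}$, and both inequalities hold as equalities. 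Otherwise $a \geq m_\grou$: the two $\grou$-programs now carry $a - 1$ jobs and the two $\bar{\grou}$-programs carry $a$ jobs, each with its own remaining tail. Peel the common segment $m_{\grou+1};\dots;m_{\groualt-1}$: by \cref{lem:det} the job count reached does not depend on the tail, so the two $\grou$-programs reach a common count $u$ and the two $\bar{\grou}$-programs a common count $u'$, and by \cref{lem:par} (run started at $a-1$ versus $a$) we get $u \leq u' \leq u+1$. At this point $\prog_{\cdot,\groualt}$ has the tail $m_\groualt;[\tfunc]$ left, hence evaluates to $g(\cdot)$, while $\prog_{\cdot,\bar{\groualt}}$ has the tail $[\tfunc]$ left, hence evaluates to $\tfunc(\cdot)$. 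So $r_{\grou,\groualt} = g(u)$, $r_{\bar{\grou},\groualt} = g(u')$, $r_{\grou,\bar{\groualt}} = \tfunc(u)$, $r_{\bar{\grou},\bar{\groualt}} = \tfunc(u')$, and each of the two inequalities of \cref{lem:xy} becomes exactly $(g - \tfunc)(u) \geq (g - \tfunc)(u')$, which holds since $g - \tfunc$ is non-increasing and $u \leq u' \leq u+1$. I expect the only real difficulty to be the bookkeeping of these peeling steps --- in particular, invoking \cref{lem:det} each time to collapse ``two programs sharing a deterministic prefix'' to a single reached count, and correctly handling the case split on whether agent $\grou$ takes a job --- since the decisive inequality at the end is an immediate consequence of monotonicity and convexity of $\tfunc$.
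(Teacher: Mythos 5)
Your proof is correct and follows essentially the same route as the paper's: peel the common prefix via \cref{lem:det}, split on whether agent $\grou$ consumes a job, track the one-job offset through the middle segment via \cref{lem:par}, and finish with monotonicity and convexity of $\tfunc$. The only difference is cosmetic --- where the paper handles the last agent $m_\groualt$ by a three-way case split on $m_\groualt$ versus $n'$, you package the same facts into the single observation that $g - \tfunc$ is non-increasing, which cleanly unifies both claimed inequalities into one statement.
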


\begin{proof}[Proof of Lemma~\ref{lem:xy}]
	It is sufficient to prove that $r_{x,y} - r_{\bar{x},y} \geq r_{x,\bar{y}} - r_{\bar{x},\bar{y}}$ because we can obtain the inequality $r_{x,y} - r_{x,\bar{y}} \geq r_{\bar{x},y} - r_{\bar{x},\bar{y}}$ by subtracting $r_{x,\bar{y}}$ and adding $r_{\bar{x},y}$.

    By \cref{lem:det}, the first $x - 1$ steps lead to settings of the same shape and a common integer $n'$.
    Thus, we may assume without loss of generality that $x = 1$.

    If $n < m_1$, $\prog_{x,y} \step \prog_{\bar{x},y}$ and $\prog_{x,\bar{y}} \step \prog_{\bar{x},\bar{y}}$.
    Then, $r_{x,y} = r_{\bar{x},y}$ and $r_{x,\bar{y}} = r_{\bar{x},\bar{y}}$, which shows the claim.

    Thus, assume that $m_1 \leq n$.
    Now,
    \begin{align*}
        \prog_{x,y} &\step m_{x+1}; \dots; m_y; [\tfunc]~\{n - 1\} \\
        \prog_{x,\bar{y}} &\step m_{x+1}; \dots; m_{y-1}; [\tfunc]~\{n - 1\}.
    \end{align*}
    By \cref{lem:det,lem:par},
	\begin{align*} 
	\prog_{x, y}& \step^{y-x} m_y;[\tfunc]~\{n'\},\\
	 \prog_{x, \bar{y}}& \step^{y-x} [\tfunc]~\{n'\},\\ \prog_{\bar{x},y}& \step^{y-x-1} m_y;[\tfunc]~\{n''\},\\ \prog_{\bar{x},\bar{y}}& \step^{y-x-1} [\tfunc]~\{n''\},\\
	\end{align*}
	such that $n' \leq n'' \leq n' + 1$. If $n' = n''$, then as above $r_{x,y} = r_{\bar{x},y}$ and $r_{x,\bar{y}} = r_{\bar{x},\bar{y}}$, and we are done.

    Assume, therefore, that $n'' = n' + 1$.
    If $m_y > n' + 1$, then $m_y;[\tfunc]~\{n'\} \step [\tfunc]~\{n'\}$ and $m_y;[\tfunc]~\{n''\} \step [\tfunc]~\{n''\}$. Then, $r_{x, y} = r_{x, \bar{y}}$ and $r_{\bar{x},y} = r_{\bar{x},\bar{y}}$, which shows the claim.
    Else, if $m_y = n' + 1$, then again $m_y;[\tfunc]~\{n'\} \step [\tfunc]~\{n'\}$ but $m_y;[\tfunc]~\{n''\} \step [\tfunc]~\{n'\}$.
    Thus $r_{x,y}=r_{\bar{x},y}=r_{x,\bar{y}}=f(n')$ and $r_{\bar{x}, \bar{y}}=f(n'+1)$. The latter term is larger by monotonicity of $\tfunc$ and we are done.
    Else, it must hold that $m_y \leq n'$, thus $m_y;[\tfunc]~\{n'\} \step [\tfunc]~\{n'-1\}$ and $m_y;[\tfunc]~\{n''\} \step [\tfunc]~\{n'\}$.
    Now, $r_{x,y} = \tfunc(n'-1)$, $r_{\bar{x},y} = r_{x,\bar{y}} = \tfunc(n')$ and $r_{\bar{x},\bar{y}} = \tfunc(n' + 1)$. The claim now follows from convexity.
\end{proof}

\subsection{Putting It All Together}

\begin{proof}[Proof of Theorem~\ref{thm:submod}]
    As outlined in the beginning of the proof, it suffices to show that \cref{eq:interviewsuper} holds for a fixed order of agents, for fixed set of agents $\Grou$ and other agents $\grou$ and $\groualt$, and for fixed coin flip indices $\coin_{\agen}$ for all agents $\agen$ in the prefix.

    By \cref{lem:monoconv}, the expected number of left-over jobs is a monotone and convex function $\tof$ in the number of available jobs on entering the suffix.
    Due to independence between the coin flips, the expectation conditioned on the coin flips is the evaluation result of $c_{\agen_1}; c_{\agen_2}; \dots; c_{\agen_k}; [\tof]~\{\jobn_{\loca \prof}\}$, where $\agen_1, \dots, \agen_k$ are the prefix agents in their order.
    Potentially flipping the roles of $x$ and $y$, the inequality then follows from \cref{lem:xy}.
\end{proof}

\section{Proof of Theorem~\ref{thm:coordination}}
\label{app:coordination}

The proof of the theorem follows rather easily from the following lemma. 

\begin{lemma}[\citealt{rabanca16}]
\label{lem:bipartite}
Given a bipartite graph $G=(A\cup B,E)$, let $f:2^A\rightarrow \mathbb{N}$ be the function that maps $S\subseteq A$ to the size of the maximum matching in the induced graph $G[S\cup B]$. Then $f$ is submodular.
\end{lemma}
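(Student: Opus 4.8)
The plan is to recognize $f$ as the rank function of a matroid — the transversal matroid of $G$ on ground set $A$, whose independent sets are the subsets of $A$ that can be matched into $B$ — and to exploit the fact that matroid rank functions are submodular. Since the matroid framework of \cref{sec:prelim} does not include transversal matroids, I would give a self-contained combinatorial proof that uses only the augmenting-path machinery of bipartite matching, but which is morally the exchange argument for transversal-matroid rank.

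First I would record two elementary facts about the marginal $f(S \cup \{a\}) - f(S)$ for $a \in A \setminus S$. It is nonnegative, since a larger induced graph admits a matching at least as large; and it is at most $1$, since a maximum matching of $G[(S \cup \{a\}) \cup B]$ uses at most one edge incident to $a$. Hence every marginal lies in $\{0,1\}$. By the reduction noted in \cref{sec:prelim}, it suffices to verify the inequality when the larger set is $T = S \cup \{b\}$ for some $b \in A \setminus (S \cup \{a\})$, i.e.
\[ f(S \cup \{a\}) - f(S) \geq f(S \cup \{a,b\}) - f(S \cup \{b\}). \]
Because both sides lie in $\{0,1\}$, the inequality can fail only if the left side is $0$ and the right side is $1$, so it is enough to rule out the configuration $f(S \cup \{a\}) = f(S)$ together with $f(S \cup \{a,b\}) = f(S \cup \{b\}) + 1$.

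Assume this configuration toward a contradiction, and fix a maximum matching $M$ of $G[S \cup B]$. The equation $f(S \cup \{a\}) = f(S)$ says, by Berge's augmenting-path characterization, that there is no $M$-augmenting path out of the exposed vertex $a$ inside $G[(S \cup \{a\}) \cup B]$. I would split on whether $b$ helps. If $f(S \cup \{b\}) = f(S)$, then $M$ is already maximum in $G[(S \cup \{b\}) \cup B]$; as $f(S \cup \{a,b\}) > |M|$, the matching $M$ is not maximum in $G[(S \cup \{a,b\}) \cup B]$, so some $M$-augmenting path exists there. Since $a$ and $b$ are both exposed and lie on the same side $A$, such a path must join one of them to an exposed $B$-vertex, and an exposed vertex can never be internal to an alternating path; a path from $b$ would contradict maximality of $M$ in $G[(S \cup \{b\}) \cup B]$, so the path starts at $a$ and avoids $b$ entirely, contradicting $f(S \cup \{a\}) = f(S)$.

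The remaining case $f(S \cup \{b\}) = f(S) + 1$ is where the real work lies, and I expect it to be the main obstacle. Here a maximum matching $M_b$ of $G[(S \cup \{b\}) \cup B]$ satisfies $|M_b| = |M| + 1$ and saturates $b$, and the assumption supplies an $M_b$-augmenting path $P$ starting at $a$. The plan is to analyze the symmetric difference $M \mathbin{\triangle} M_b$, which — carrying exactly one surplus $M_b$-edge — is a disjoint union of alternating cycles and a single $M$-augmenting path $Q$, necessarily having the $M$-exposed, $M_b$-saturated vertex $b$ as an endpoint. I would then combine $P$ and $Q$, rerouting the walk at their first shared vertex, to extract an $M$-augmenting path that starts at $a$ and avoids $b$, once more contradicting $f(S \cup \{a\}) = f(S)$. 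The delicate bookkeeping is tracking how $P$ and $Q$ interact so that the spliced path is genuinely alternating and terminates at an $M$-exposed $B$-vertex; once that splicing is carried out cleanly, the contradiction closes the proof.
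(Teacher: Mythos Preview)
The paper does not prove this lemma; it is quoted from an external source and used as a black box in the proof of \cref{thm:coordination}. So there is no ``paper's proof'' to compare against, and the relevant question is whether your argument stands on its own.

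Your reduction to $\{0,1\}$-valued marginals and your treatment of the case $f(S\cup\{b\})=f(S)$ are correct and clean. The second case, however, is set up more delicately than necessary. Rather than fixing a maximum matching $M_b$ of $G[(S\cup\{b\})\cup B]$, taking an $M_b$-augmenting path $P$ from $a$, and then attempting to splice $P$ with the $M$-augmenting path $Q\subseteq M\mathbin\triangle M_b$ ``at their first shared vertex'', take instead any maximum matching $M_{ab}$ of $G[(S\cup\{a,b\})\cup B]$ and look at $M\mathbin\triangle M_{ab}$ directly. Since $|M_{ab}|-|M|=2$ in this case, and since $M_{ab}$ is maximum, the symmetric difference contains exactly two vertex-disjoint $M$-augmenting paths. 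By the same reasoning you already used, each has its $A$-endpoint in $\{a,b\}$, so one starts at $a$ and the other at $b$; the former lies entirely in $G[(S\cup\{a\})\cup B]$ (it cannot touch $b$, which sits in the other component), contradicting $f(S\cup\{a\})=f(S)$. This bypasses all of the splicing bookkeeping you flagged as the main obstacle. In fact the same symmetric-difference argument handles your first case uniformly, so the case split can be dropped altogether.

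Your opening remark is also worth emphasizing: once one knows that $f$ is the rank function of the transversal matroid on $A$, submodularity is immediate from the general fact that matroid rank functions are submodular; the augmenting-path argument is effectively reproving that special case.
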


\begin{proof}[Proof of Theorem~\ref{thm:coordination}]
For each agent $\agen$ and job $\job$, flip a coin with bias $p_{\agen\job}$, and fix the coin flips hereinafter. 
As before, the employment function is a convex combination of functions corresponding to these fixed coin flips, so it is sufficient to show that each of these functions is submodular. 

For each locality $\loca$, consider a bipartite graph $G_\ell=(N\cup J_\loca,E_\loca)$, where $J_\loca$ is the set of jobs in $\loca$, and $(\agen,\job)\in E_\loca$ if and only if the (previously fixed) coin flip corresponding to this edge succeeded. 
Let $f_\loca:2^N\rightarrow \mathbb{N}$ be the function that maps $N'\subseteq N$ to the size of the maximum matching in the induced graph $G_\loca[N'\cup J_\loca]$. 
By Lemma~\ref{lem:bipartite}, $f_\loca$ is submodular. 

Now denote the employment function (for fixed coin flips) by $g$. 
We can write $g=\sum_{\loca\in L} g_\loca$, where each $g_\loca$ is defined by 
$$g_\loca(S):=f_\loca(\{\agen\in N:\ (\agen,\loca)\in S\}) $$
for all subsets of agent-locality pairs $S\subseteq 2^{N\times L}$. 
The submodularity of $g_\ell$ follows directly from the submodularity of $f_\loca$, and therefore $g$ itself is submodular. 
\end{proof}

\section{Experiments}
\label{app:simulations}
In the following sections, we provide more information about our simulations, namely the two experiments described in the body of the paper and two additional ones.
Besides figures and interpretations, each section links to the IPython notebook of the corresponding experiment.
These notebooks show the exact code and output of our simulation runs, and they can be downloaded for interactive experimentation.

\subsection{Number of Localities}
\label{app:num_localities}
The first experiment varies the number of localities. Its setting and the resulting diagram in \cref{fig:numlocalities} are described in detail in \cref{sec:simulations}.
The corresponding notebook can be found at \glink{https://github.com/pgoelz/migration/blob/master/num\_localities.ipynb}.

\subsection{Number of Agents}
\label{app:num_agents}

The second experiment varies the number $n$ of agents.
This modification of the setting was also described in \cref{sec:simulations}: There are 10 localities, each with an equal cap of $n/10$.
This experiment produced \cref{fig:numagents}, but we also generated the following plots of the absolute utilities achieved by both mechanisms:
\begin{center}
    \includegraphics[width=.45\textwidth]{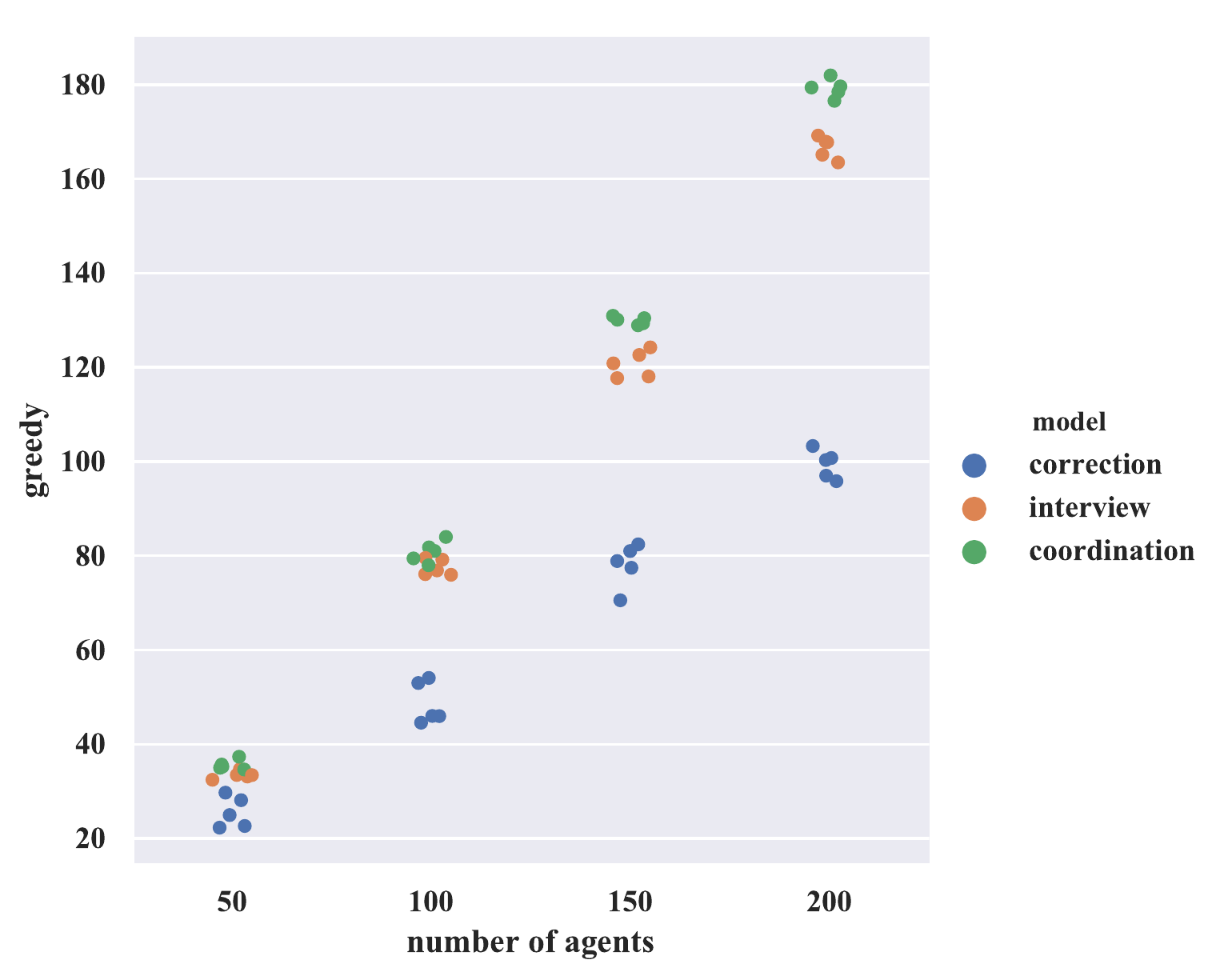} \quad
    \includegraphics[width=.45\textwidth]{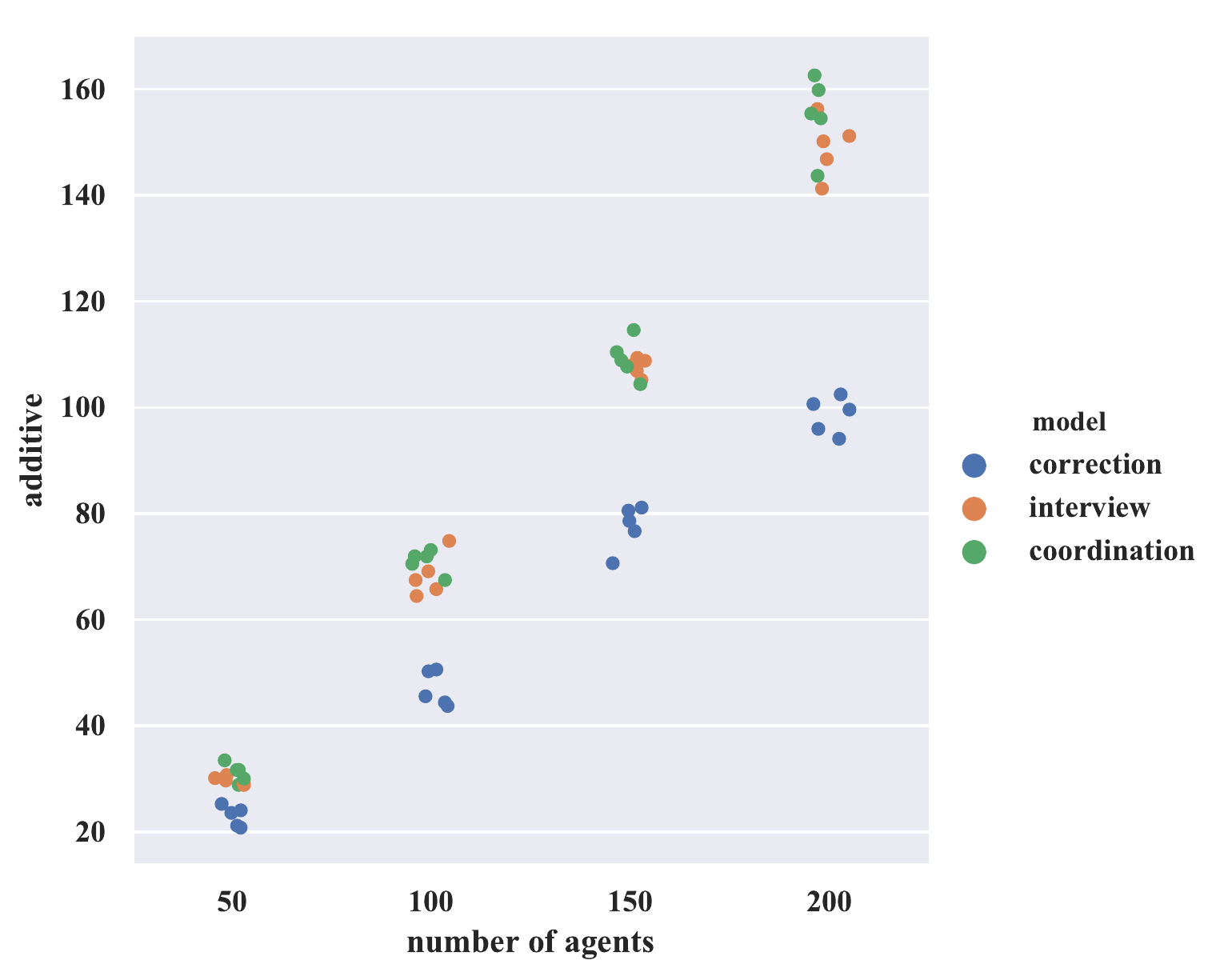}
\end{center}
Please mind the different scales of the y-axes.
As claimed before, the utility of the additive algorithm is very close to $n/2$ in the correction model, which is the expected value of the total number of people who can potentially be brought into work, since qualification probabilities are selected uniformly from $[0, 1]$.
This explains why the percentage improvement of the greedy algorithm over the additive one measured in \cref{fig:numagents} is relatively low in these settings.

The notebook for the experiment is available at \glink{https://github.com/pgoelz/migration/blob/master/num\_agents.ipynb}.

\subsection{Number of Professions}
\label{app:num_professions}
Let there be 10 localities, each with a capacity of 10.
The total number of agents is set back to 100.
We vary the number of professions, each of which has at least one agent.
All other agents are assigned to a uniformly chosen profession.
There is a job of the right profession for every agent, and each locality randomly receives 10 of these jobs.

\begin{center}
    \includegraphics[width=.45\textwidth]{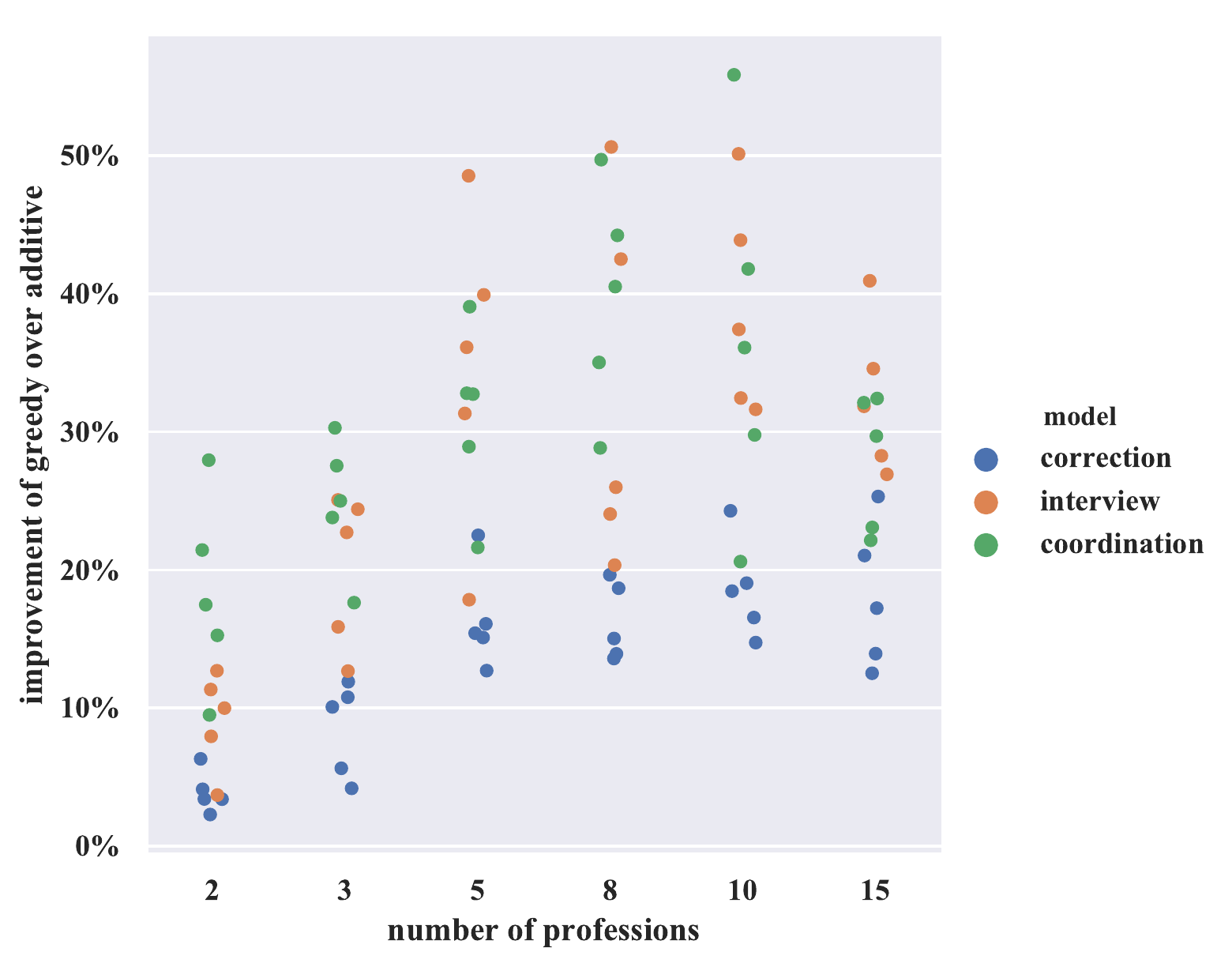}
\end{center}

When we increase the number of professions, the gap between greedy and additive increases significantly.
Our hypothesis is that, with many professions, a high fraction of matchings are bad just because the agents' professions do not line up with job availability, which disproportionately disadvantages the additive algorithm.
This can be observed in the following plot of absolute employment, where the performance of the additive algorithm stagnates on a low level, nearly independently of the model and at numbers of professions at which the greedy algorithm still achieves much higher employment.

\begin{center}
    \includegraphics[width=.45\textwidth]{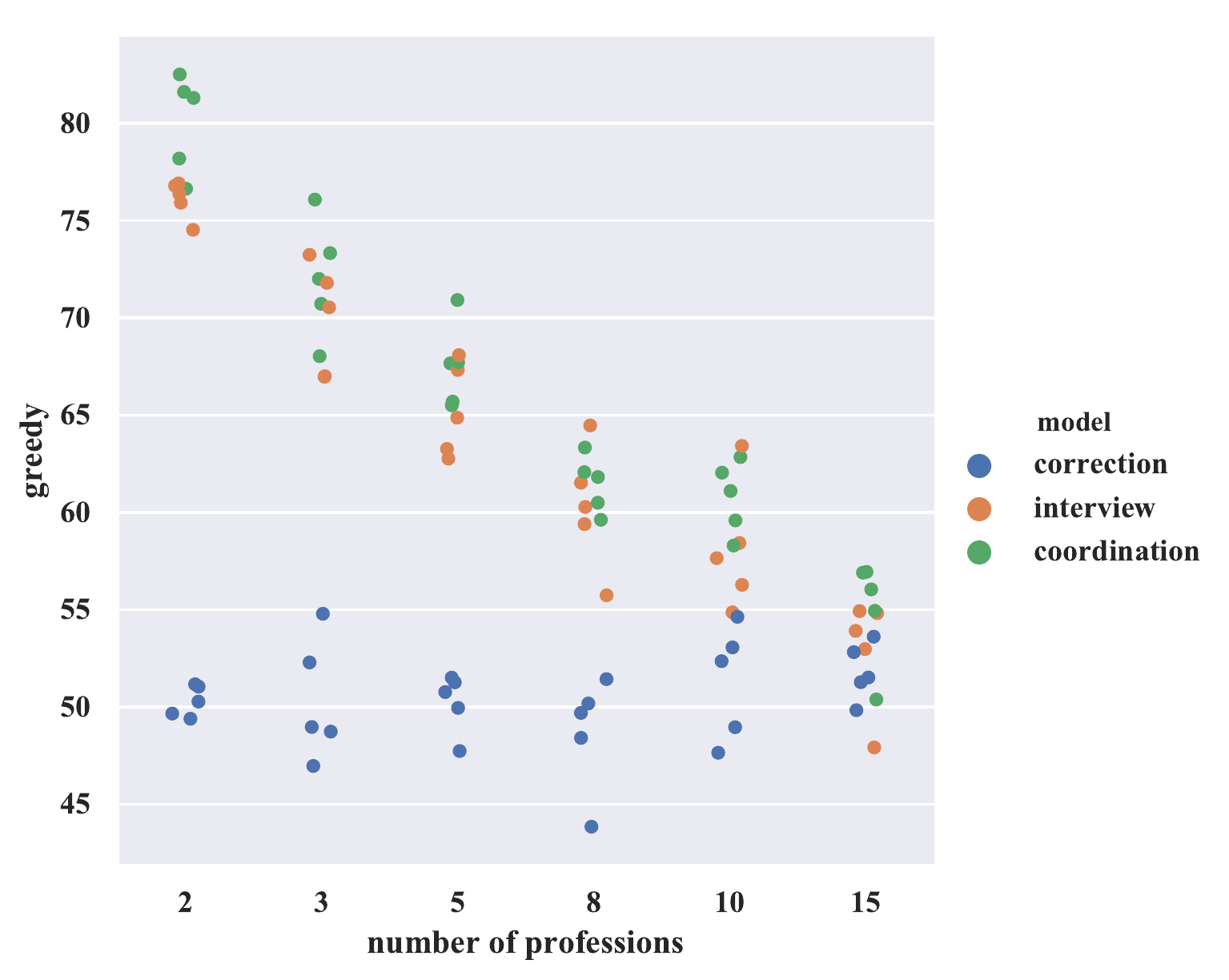} \quad
    \includegraphics[width=.45\textwidth]{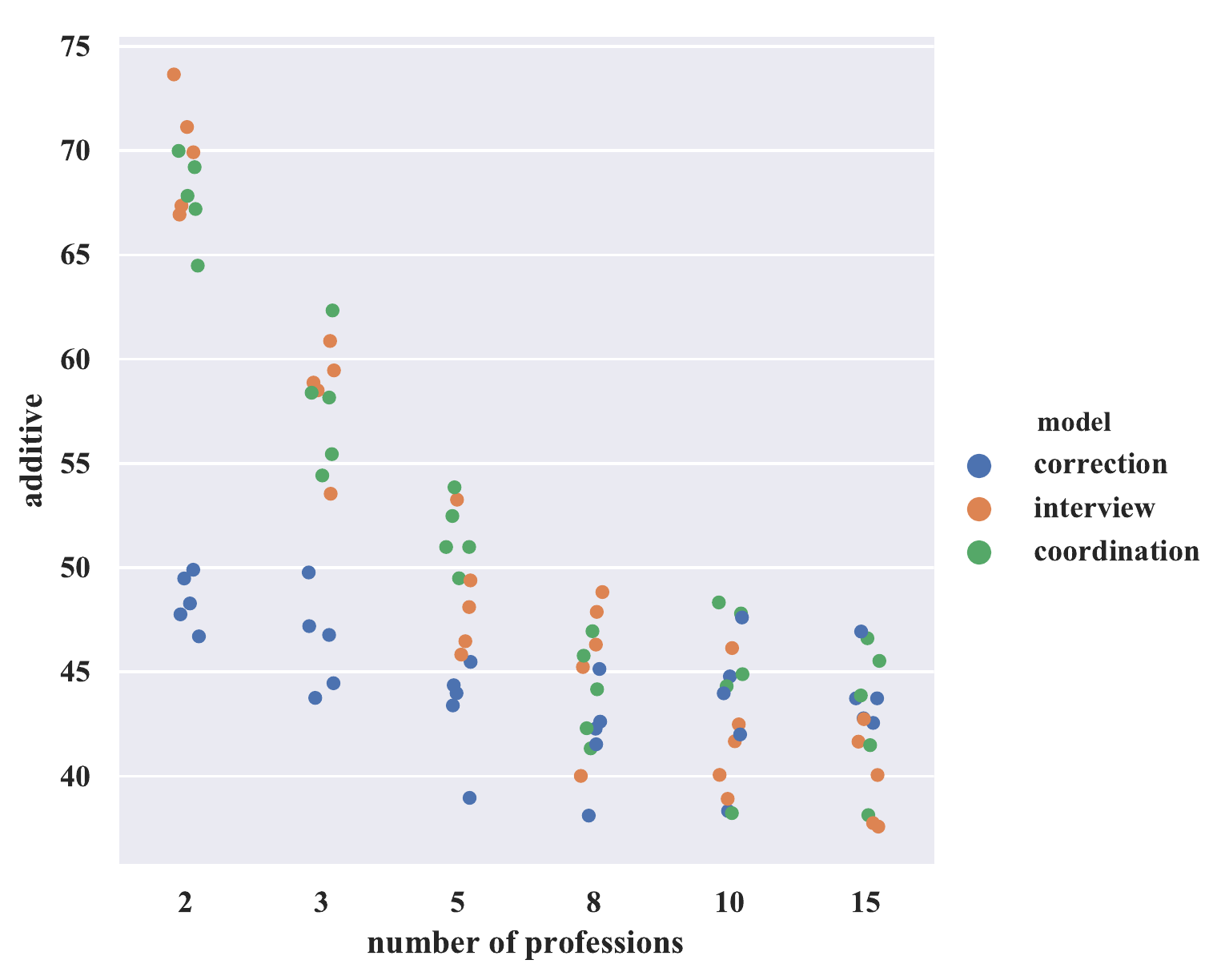}
\end{center}

The corresponding notebook can be found at \glink{https://github.com/pgoelz/migration/blob/master/num\_professions.ipynb}.

\subsection{Job Availability}
\label{app:job_availability}
In the following experiment, we fix 10 localities, each with a capacity of 10.
We have 100 agents, with 50 agents for each of the two professions.
Independently of the caps, we vary the number of jobs for both professions, which we distribute randomly between the localities.
\begin{center}
    \includegraphics[width=\textwidth]{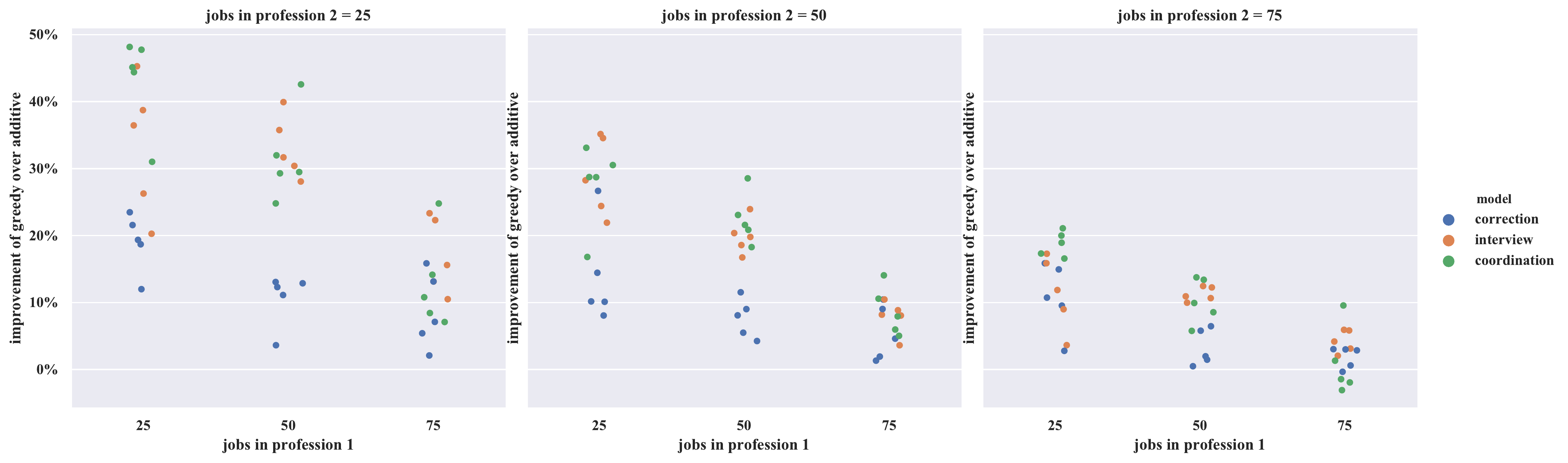}
\end{center}
Especially when jobs are scarce, i.e., when at least one profession has only 25 jobs for 50 agents, the greedy approach greatly improves upon the additive one.
When there is a surplus of both kinds of jobs, the gains again decrease since additive optimization already performs well in these situations, as can be seen in the following plot of absolute utilities:
\begin{center}
    \includegraphics[width=\textwidth]{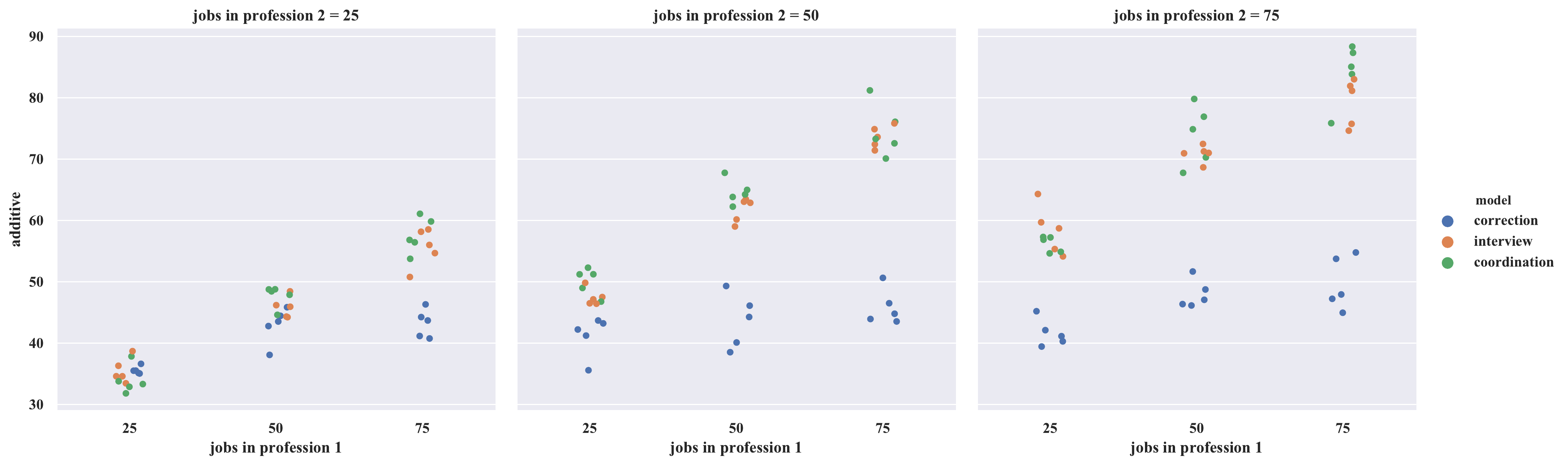}
\end{center}
In the correction model, the expected number of people qualifying for employment continues to be 50, which is why employment barely increases when going from 125 total jobs to 150.

Please refer to \glink{https://github.com/pgoelz/migration/blob/master/job\_availability.ipynb} for the simulation notebook.

\subsection{Specialization}
\label{app:specialization}
Our last setting again has 100 agents and 10 localities of capacity 10.
The total number of jobs is also 10 per locality now, but we do not distribute the jobs of the two professions randomly.

Instead, we define two kinds of localities, \emph{specialized} and \emph{unspecialized} ones.
In an unspecialized locality, both professions have 5 jobs each.
In a specialized locality, one profession has 8 jobs, the other just 2.
To keep the total number of jobs equal between both professions, we always have $s$ many localities specialized on profession~1 and $s$ many specialized on profession~2, where the \emph{specialization} $s$ is an integer between 0 and 5.
\begin{center}
    \includegraphics[width=.45\textwidth]{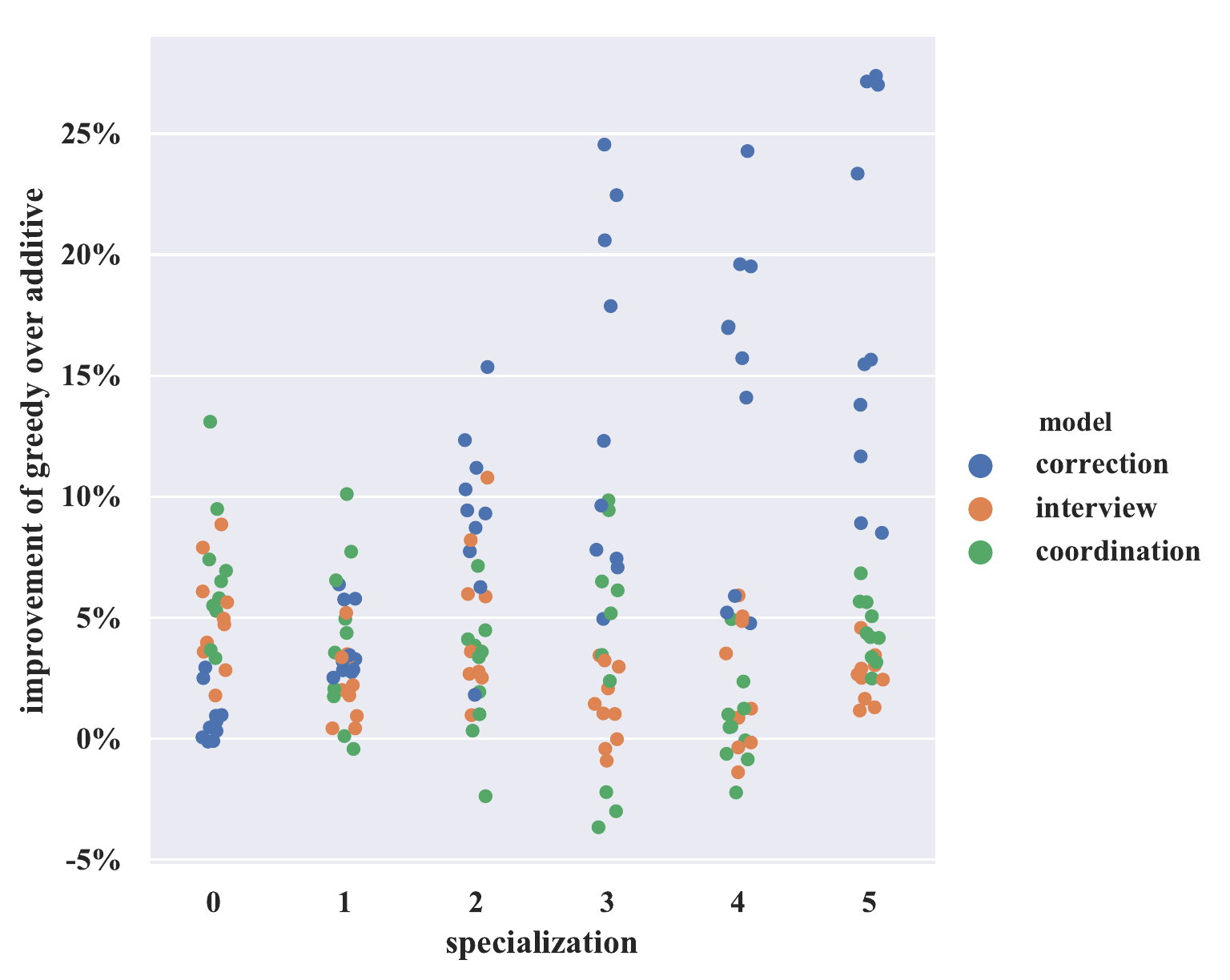}
\end{center}

In general, this setting seems to lead to less pronounced gains in terms of utility.
Interestingly, by far the biggest improvements can be made in the correction model, probably because the additive algorithm performs particularly bad in this setting.
For specialization levels of 3 and 4, we actually see a few cases where the additive algorithm performs slightly better than the greedy one.
Still, a majority of simulations in this setting --- whose only randomness comes from the agents' probabilities and the sampling noise --- see an improvement in utility, and the potential gains are stronger than the potential losses.

We direct the reader to \glink{https://github.com/pgoelz/migration/blob/master/specialization.ipynb} for the corresponding notebook.

\typeout{get arXiv to do 4 passes: Label(s) may have changed. Rerun}
 
\end{document}